\newtheorem{theorem}{Theorem}
\newtheorem{lemma}[theorem]{Lemma}
\newtheorem{definition}{Definition}
\newcommand*{\cC}{\mathcal{C}}
\newcommand*{\cM}{\mathcal{M}}
\newcommand*{\brackett}[3]{\left\langle #1 \right| \left. #2 \right. \left| #3 \right\rangle}
\newcommand*{\cD}{\mathcal{D}}
\newcommand*{\eye}{{\mathbbm{1}}}
\newcommand*{\bbR}{{\mathbb{R}}}
\newcommand*{\cE}{\mathcal{E}}
\newcommand{\beq}{\begin{equation}}
\newcommand{\enq}{\end{equation}}
\newcommand{\ketbra}[1]{\ket{#1} \bra{#1}}
\newcommand*{\sep}{{\mathscr{S}}}
\newcommand{\tr}{\mathrm{Tr}}
\newcommand*{\cS}{\mathcal{S}}
\newcommand*{\cH}{\mathcal{H}}
\newcommand*{\cI}{\mathcal{I}}
\newcommand*{\renyi}{R\'{e}nyi }
\newcommand{\ere}[1]{E_{RE}^{(#1)}}
\begin{document}

\newcommand{\itwomax}{{^2I}_{\max}}

\title{A strong converse for the quantum state merging protocol}
\author{Naresh Sharma \\
Tata Institute of Fundamental Research \\
Mumbai, India \\
Email: {\tt nsharma@tifr.res.in}}
\date{\today}
\maketitle

\begin{abstract}
The Polyanskiy-Verd\'{u} paradigm provides an elegant way of using
generalized-divergences to obtain strong converses and thus far has remained
confined to protocols involving channels (classical or quantum). In this paper,
drawing inspirations from it, we provide strong converses for protocols
involving LOCC (local
operations and classical communication). The key quantity
that we work with is the \renyi relative entropy of entanglement.
We provide a strong converse for the quantum state merging
protocol that
gives an exponential decay of the fidelity of the protocol for rates below the
optimum with the number of copies of the state and are provided both for
entanglement rate with LOCC as well as for
classical communication with one-way LOCC. As an aside, the developments
also yield short strong converses for the entanglement-concentration
of  pure states and the Schumacher compression.
\end{abstract}

\section{Introduction}

Many information-theoretic problems deal with finding out the
minimal resources needed to accomplish a task or
the maximal yield obtained after a task given the
constraints and many times a definite optimal
answer is provided when the number of copies of the input(s),
resources, or output(s)
is large and examples include Schumacher compression, quantum
state merging, entanglement concentration etc.
\cite{covertom, wilde-book, hayashi, petz-book}.

These answers are typically given in two parts: for the number of
resources larger (respectively yield smaller) than the optimal, there exists
a protocol to accomplish
the task (termed as achievability) and for the number of resources
smaller (respectively yield larger) than the optimal, any protocol will
perform badly (termed as converse). In the latter case, a strong
converse, if it exists, additionally says that
any protocol will perform very badly (as bad as it can be)
and this is quantified in terms
of a performance measure. Strong converses provide a refined
view to the optimal quantities since now they can be
seen as a sharp dividing line between what can and
cannot be achieved.

These strong converses have a long history starting from the
works in the classical case of Wolfowitz \cite{wolfowitz-book},
Arimoto \cite{arimoto-1973-converse}
to more recent works in the quantum case of
Winter \cite{winter-99-converse, andreas-thesis}, Ogawa \&
Nagaoka \cite{ogawa-1999-converse} and K\"{o}nig and Wehner
\cite{konig-2009-converse}. This is hardly a list of exhaustive
references on the topic since there have been more recent works (some
of which are referenced later) and the literature on smooth
entropies starting from the work of Renner and Wolf \cite{renner-wolf-2004}
provides bounds that when coupled with asymptotic equipartition
property for the independent and identically distributed (i.i.d.) copies
yield strong converses as well.

Recently, a fresh and enchanting  take on strong converses has been
provided by Polyanskiy and Verd\'{u}
using the generalized relative entropies that satisfy certain well
expected properties such as monotonicity under the application of
the classical channels \cite{polyanskiy-2010-converse}.

The idea of using monotonicity to prove converses is due to Blahut who used it
to prove the Fano inequality that gives the weak converse \cite{blahut1976}.
Blahut employed the traditional relative entropy as opposed to the generalized
relative entropies employed by Polyanskiy and Verd\'{u} who show that,
in particular, the \renyi relative entropy would yield the Arimoto converse.

Arimoto's proof was extended by Ogawa \& Nagaoka for
sending classical information across quantum channels. But it has not
been possible to extend Arimoto's proof for other protocols such as
getting an exponential bound for the quantum information transfer
across quantum channels. 
Unlike Arimoto's proof, Polyanskiy and Verd\'{u}'s proof relied on certain
properties of generalized relative entropies that are also satisfied
by the quantum generalized relative entropies and their approach
has now been extended to the quantum
domain (see Ref. \cite{sharma-prl-2013}) and applied to various protocols
\cite{wilde-2013,gupta-wilde-2013} (see also Ref. \cite{wilde-2013} for a
discussion on this).

Not all protocols admit strong converses \cite{no-strong-converse-2011}
and in some cases, no definite answer is known whether a strong
converse would exist or not. For example, we don't have a strong
converse for the quantum capacity except in
some special cases mentioned in Ref. \cite{morgan-winter-2014} although
the same paper provides a ``pretty strong converse" for degradable channels.

The Polyanskiy-Verd\'{u} paradigm has thus far remained confined
to protocols involving channels. Quantum information theory is richly
endowed with another class of operations namely the LOCC.
Inspired by the paradigm, could one address the protocols involving LOCC?

Just as Polyanskiy and Verd\'{u} define \renyi mutual information that does not
increase under the application of the channel (see the quantum \renyi
information measures in Ref. \cite{sharma-prl-2013}), we seek
a \renyi entanglement measure that does not increase under LOCC.
The relative entropy of entanglement
is an entanglement monotone \cite{vidal-2000, entanglement-horo-2009}
and hence, does not increase under LOCC. A \renyi relative entropy of
entanglement is then not hard to define that, leveraging the result by
Vedral \emph{et al} \cite{vedral-1997}, does not increase under LOCC.

For the protocols involving channels,
typically, there is a reduction, using monotonicity, to \renyi relative entropy
involving binary distributions that are a functions of probability of
error or the fidelity that tell us how well the protocol performs.
Such reductions no longer seem to be applicable/useful for protocols
such as the quantum state merging. Instead an inequality by
van Dam and Hayden \cite{dam-hayden-2008}
involving the \renyi entropies of two states 
and their fidelity given in a completely different context
turns out to be just the thing one is looking for and it is also
useful in providing bounds for the \renyi relative entropy
of entanglement.

For a Hilbert space $\cH$, we define $\cS(\cH) = \{\rho \geq 0 : \tr \rho = 1\}$.
Fidelity $F$ between the states $\rho$, $\sigma$ $\in \cS(\cH)$ is
$F(\rho,\sigma) \equiv || \sqrt{\rho} \sqrt{\sigma} ||_1 = $ $\max| \bra{\phi}
\ket{\psi} |$,
where the maximization is over all purifications $\ket{\phi}$, $\ket{\psi}$
of $\rho$ and $\sigma$ respectively.

We denote a maximally entangled state in $\cH_A \otimes \cH_B$
with Schmidt rank $K$ by $\Phi_K^{AB}$.
The set of bipartite separable states of $A$ and $B$
is denoted by $\sep^{A:B}$. The terms quantum operation and completely 
positive trace preserving (cptp) map are used interchangeably.
For a pure state $\ket{\Psi}$, $\Psi = \ketbra{\Psi}$.

All the logarithms are to the base $2$.
The quantum relative entropy from $\rho$ to $\sigma$ is given by
$S(\rho || \sigma) \equiv \tr \rho (\log \rho - \log \sigma)$, the von Neumann
entropy is given by $S(A)_\rho \equiv - \tr \rho^A \log \rho^A$.
At times, we shall use $S_\alpha(\rho^A)$ instead of $S_\alpha(A)_\rho$.
For a bipartite state $\rho^{AB}$, the conditional entropy of $A$ given $B$
is given by $S(A|B)_\rho \equiv S(AB)_\rho - S(B)_\rho$ and the
quantum mutual information between $A$ and $B$ is given by
$I(A:B)_\rho \equiv S(A)_\rho + S(B)_\rho - S(AB)_\rho$.

For $\rho$, $\sigma$ $\geq 0$, $\alpha \in [0,2] \backslash \{1\}$,
the $\alpha$-quasi-relative entropy,
from $\rho$ to $\sigma$ ($\rho$, $\sigma$ $\geq 0$) is defined as
\beq
Q_\alpha(\rho || \sigma) \equiv
\text{sign}(\alpha-1) \tr \rho^\alpha \sigma^{1-\alpha},
\enq
and the \renyi $\alpha$-relative entropy from $\rho$ to $\sigma$ is defined as
\beq
S_\alpha(\rho || \sigma) \equiv
\frac{1}{\alpha-1} \log \tr \rho^\alpha \sigma^{1-\alpha}
\enq
where limits are taken for $\alpha = 1$ and we drop the subscript.
One can derive this from the generalized divergences defined by Petz
\cite{petz-quasi-entr-1986}.
We quickly recall from Refs. \cite{petz-quasi-entr-1986, petz-quasi-entr-2010}
that for $\alpha \in [0,2]$ and a cptp map $\cE$,
\beq
S_\alpha(\rho || \sigma) \geq S_\alpha\left[ \cE(\rho) || \cE(\sigma)\right].
\enq

Let $S_{\alpha}(A)_\rho$, $\alpha \geq 0$ be the $\alpha$-entropy of
$\rho^A$ given by
\beq
S_{\alpha}(A)_\rho = \frac{1}{1-\alpha} \log \tr (\rho^A)^{\alpha},
\enq
where, again, limits are taken for $\alpha = 1$ and we drop
the subscript.
The \renyi coherent information (see Ref. \cite{sharma-prl-2013})
is given by
\beq
I_\alpha(A \rangle B)_\rho \equiv
\frac{\alpha}{\alpha-1} \log \tr \left[ \tr_B (\rho^{AB})^\alpha \right]^{1/\alpha}.
\enq

\section{Nature of the bounds obtained}

The strong converse bounds we provide are
obtained using the \renyi relative entropy similar in spirit to
that of Arimoto and Polyanskiy \& Verd\'{u}
\cite{arimoto-1973-converse, polyanskiy-2010-converse}.
These bound are provided for the quantum state merging,
entanglement concentration and the Schumacher compression.
We note that strong converse is already known for these
protocols but, to the best of author's knowledge, there are no
strong converses known using the \renyi approach.

We now illustrate the nature of the bounds we provide.
Suppose for a bipartite state $\rho^{AB}$,
the optimal quantity is given in terms of a function $f(\rho^{AB})$.
Let $\cI \subseteq \bbR$ be an interval with $1$ as its boundary point
and not containing $\{0\}$.
Suppose a \renyi generalization of  $f(\rho^{AB})$ for
$\alpha \in \cI$ is given
by $f_\alpha(\rho^{AB})$ with  $f_\alpha(\rho^{AB})$ $\leq$ $f(\rho^{AB})$
for $\alpha \in \cI$,
where $\lim_{\alpha \to 1} f_\alpha(\rho^{AB}) = f(\rho^{AB})$ and
for $n$ copies,
$f_\alpha\left[ (\rho^{AB})^{\otimes n} \right]$
$= n f_\alpha(\rho^{AB})$. Suppose the resources consumed that we want
to lower bound are $g(n)$ for $n$ copies.
The bounds that we obtain are of the form
\beq
\label{gen-bound}
\log(\text{Fidelity of the protocol}) \leq n \zeta \left| \frac{\alpha-1}{\alpha}
\right| \left[ \frac{g(n)}{n} - f_\alpha(\rho^{AB}) \right],
\enq
where $\zeta > 0$ is a constant not dependent on any parameters.
These bounds clearly have the same flavor
as the Arimoto converse \cite{arimoto-1973-converse}. If for all $n$,
$g(n)/n$ is bounded from above by $R$ such that $R < f(\rho^{AB})$,
then it follows that we can choose a $\alpha$ close to $1$ such that
$g(n)/n - f_\alpha(\rho^{AB})$ $\leq R - f_\alpha(\rho^{AB})$ is
negative and the RHS is independent of $n$.

Note that in some cases,
instead of resources consumed we are interested in the yield
$h(n)$ (for $n$ copies). For example, in the case of entanglement
concentration, we are interested in the number
of EPR pairs generated. Then the bounds obtained are of the form
\beq
\label{gen-bound2}
\log(\text{Fidelity of the protocol}) \leq n \zeta \left| \frac{\alpha-1}{\alpha}
\right| \left[ f_\alpha(\rho^{AB}) - \frac{h(n)}{n} \right].
\enq
In this case, $f_\alpha(\rho^{AB}) \geq f(\rho^{AB})$ with
$\lim_{\alpha \to 1} f_\alpha(\rho^{AB}) = f(\rho^{AB})$. If $h(n)/n$
$ \geq R$ and $R > f(\rho^{AB})$, then $f_\alpha(\rho^{AB}) - h(n)/n$
$\leq f_\alpha(\rho^{AB}) - R$ and hence, we can choose an $\alpha$
such that $f_\alpha(\rho^{AB}) - R$ is negative and the RHS is
independent of $n$.

In either case, the fidelity decays exponentially with $n$. Our
purpose of stating this `last mile' common to several protocols
is to avoid repetition and we shall henceforth state a strong converse as
the bounds in \eqref{gen-bound} or \eqref{gen-bound2}.

\section{\renyi relative entropy of entanglement}

We define the \renyi $\alpha$-relative entropy of entanglement (RREE) for a 
bipartite state $\rho^{AB}$ as
\beq
\ere{\alpha}(A:B)_\rho \equiv \inf_{\sigma^{AB} \in \sep^{A:B}}
S_\alpha( \rho^{AB} || \sigma^{AB}).
\enq
We now prove some of its properties.

\begin{lemma}
\label{lemma1}
For any cptp map $\cE: B \to C$ and $\alpha \in [0,2]$,
\begin{align}
\ere{\alpha}(A:B)_\rho & \geq \ere{\alpha}(A:C)_{\cE(\rho)}.
\end{align}
(Note that the same applies to a local map over $A$ for $\ere{\alpha}$
as well due to symmetry.)
\end{lemma}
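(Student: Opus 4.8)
The plan is to reduce the statement to the data-processing inequality for $S_\alpha$ already recorded in the excerpt, using the elementary fact that a local channel preserves separability. First I would observe that the extended map $\text{id}_A \otimes \cE$, acting as the identity channel on $A$ and as $\cE$ on $B$, is itself a cptp map from $AB$ to $AC$; hence the monotonicity $S_\alpha(\rho || \sigma) \geq S_\alpha[\cE'(\rho) || \cE'(\sigma)]$, valid for $\alpha \in [0,2]$, applies with $\cE' = \text{id}_A \otimes \cE$.

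Next I would establish the key structural observation: if $\sigma^{AB} \in \sep^{A:B}$ then $(\text{id}_A \otimes \cE)(\sigma^{AB}) \in \sep^{A:C}$. This is immediate from the definition of separability. Writing $\sigma^{AB} = \sum_i p_i \, \rho_i^A \otimes \rho_i^B$, linearity gives $(\text{id}_A \otimes \cE)(\sigma^{AB}) = \sum_i p_i \, \rho_i^A \otimes \cE(\rho_i^B)$, and since each $\cE(\rho_i^B)$ is again a valid density operator on $C$, the image is a convex combination of product states and therefore separable.

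With these two ingredients the argument is a short chain of inequalities. Fix any $\sigma^{AB} \in \sep^{A:B}$. Data processing applied to $\text{id}_A \otimes \cE$ gives
\begin{align}
S_\alpha(\rho^{AB} || \sigma^{AB})
&\geq S_\alpha\left[ (\text{id}_A \otimes \cE)(\rho^{AB}) \,\|\, (\text{id}_A \otimes \cE)(\sigma^{AB}) \right] \notag \\
&\geq \inf_{\tau^{AC} \in \sep^{A:C}} S_\alpha\left[ (\text{id}_A \otimes \cE)(\rho^{AB}) \,\|\, \tau^{AC} \right]
= \ere{\alpha}(A:C)_{\cE(\rho)},
\end{align}
where the second inequality uses that $(\text{id}_A \otimes \cE)(\sigma^{AB})$ is one particular element of $\sep^{A:C}$. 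Since this bound holds for every separable $\sigma^{AB}$, taking the infimum over $\sigma^{AB} \in \sep^{A:B}$ on the left yields $\ere{\alpha}(A:B)_\rho \geq \ere{\alpha}(A:C)_{\cE(\rho)}$, as claimed. The parenthetical remark about a local map on $A$ follows by the same argument with the two systems interchanged.

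I do not anticipate a genuine obstacle. The only points requiring care are that the range $\alpha \in [0,2]$ matches exactly the regime in which the underlying $S_\alpha$ data-processing inequality was quoted, and that the optimization over separable reference states runs in the correct direction: because the infimum is over $\sigma$, exhibiting the single good reference state $(\text{id}_A \otimes \cE)(\sigma^{AB})$ on $AC$ suffices for the lower bound. In particular no compactness or attainment of the infimum is needed, so continuity or lower semicontinuity of $S_\alpha$ on the separable set can be avoided entirely.
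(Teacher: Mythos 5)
Your proof is correct and takes essentially the same approach as the paper, which simply states that the result ``follows from monotonicity'' and omits the details: your argument---data processing for $S_\alpha$ under the cptp map $\mathrm{id}_A \otimes \cE$, combined with the observation that a local channel maps $\sep^{A:B}$ into $\sep^{A:C}$, followed by taking the infimum---is precisely the omitted standard argument.
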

Proof follows from monotonicity using arguments similar to those in Refs.
\cite{polyanskiy-2010-converse, sharma-prl-2013} and we omit the details.

\begin{lemma}[Vedral \emph{et al} \cite{vedral-1997}]
RREE is LOCC-monotone, i.e., it does not increase under LOCC.
\end{lemma}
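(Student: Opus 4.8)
The plan is to show that the \renyi relative entropy of entanglement does not increase under a general LOCC operation by decomposing LOCC into its constituent primitive operations and invoking the already-established monotonicity under local cptp maps (Lemma~\ref{lemma1}) together with two further invariance/monotonicity properties. Recall that any LOCC protocol is built from a finite sequence of rounds, each of which is one of the following: a local cptp map applied by Alice (on $A$), a local cptp map applied by Bob (on $B$), or the communication of a classical register (a measurement outcome) from one party to the other. Since the set of separable states $\sep^{A:B}$ is closed under all of these operations, the strategy is to verify that $\ere{\alpha}$ is nonincreasing under each primitive and then compose.

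First I would dispense with the local-map steps: these are handled directly by Lemma~\ref{lemma1}, which gives monotonicity of $\ere{\alpha}$ under a cptp map on $B$, and by the remark following it that the same holds for a local map on $A$ by the symmetry of the definition under interchange of the two systems. Next I would treat the classical communication steps. The key observation is that appending a local classical register and sending it to the other party can be modeled as a local cptp map augmented by the transmission of a copy of a commuting classical flag; because the separable set is preserved and the optimization in the definition of $\ere{\alpha}$ ranges over all separable $\sigma^{AB}$, one can always construct a competing separable state that carries the same classical register, so the infimum can only decrease. Concretely, the workhorse here is the classical--quantum structure: one writes the post-communication state as a convex combination over measurement outcomes, chooses the comparison separable state to share the same convex structure, and applies the joint quasi-convexity/data-processing bookkeeping inherited from the monotonicity inequality $S_\alpha(\rho||\sigma) \geq S_\alpha[\cE(\rho) || \cE(\sigma)]$ recorded in the excerpt.

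The main obstacle I expect is the careful treatment of the classical communication round, since this is precisely the step that distinguishes LOCC from merely a product of independent local maps. The subtlety is that after Alice measures and sends her outcome to Bob, the bipartite cut itself is preserved but the operations on the two sides become correlated through the shared classical index; one must verify that the natural candidate separable state (obtained by applying the same correlated-but-local operation to a separable $\sigma^{AB}$) remains separable, so that it is an admissible competitor in the infimum defining $\ere{\alpha}(A:C)$. Once this admissibility is checked, monotonicity under the channel that implements the measurement-and-send operation (via the inequality above, applied with $\rho$ and the chosen $\sigma$) delivers the desired decrease.

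Finally I would assemble the pieces by induction on the number of rounds: the result holds trivially for zero rounds, and if it holds after $k$ rounds then applying any one of the three primitives in round $k+1$ preserves the inequality by the per-primitive arguments just described. Telescoping over the whole protocol yields $\ere{\alpha}(A:B)_\rho \geq \ere{\alpha}(A':B')_{\Lambda(\rho)}$ for any LOCC map $\Lambda$, which is the claimed LOCC-monotonicity. This mirrors exactly the argument of Vedral \emph{et al} for the ordinary relative entropy of entanglement, the only new input being that the $\alpha$-version of the underlying monotonicity (rather than the $\alpha=1$ case) is already available to us.
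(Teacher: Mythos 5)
You take a genuinely different, and considerably heavier, route than the paper: the paper gives no proof here at all, importing the lemma from Vedral \emph{et al}, and the argument behind that citation is global rather than round-by-round. A trace-preserving LOCC operation $\Lambda$ is, taken as a whole, a cptp map and also a separable operation, so it maps $\sep^{A:B}$ into the separable set of the output bipartition. Hence for every $\sigma^{AB}\in\sep^{A:B}$,
\begin{equation}
S_\alpha(\rho^{AB}\,||\,\sigma^{AB}) \;\geq\; S_\alpha\!\left[\Lambda(\rho^{AB})\,||\,\Lambda(\sigma^{AB})\right] \;\geq\; \ere{\alpha}(A^\prime:B^\prime)_{\Lambda(\rho)},
\end{equation}
using only the Petz monotonicity already recorded in the paper for $\alpha\in[0,2]$; taking the infimum over $\sigma^{AB}$ finishes the proof, with no decomposition into rounds and no separate treatment of classical communication. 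Your induction over primitives is valid and is essentially the machinery the paper reserves for the \emph{stronger} statement proved immediately after this lemma (monotonicity on average under a unilocal instrument, via the flag map $\sum_k \ket{k}^X\otimes\cE_k$ and the dephasing map $\cD$ with Kraus operators $\ketbra{k}^X\otimes\eye$); for the plain LOCC monotonicity claimed here it buys nothing extra. Two cautions. First, the step you yourself flag as the crux --- admissibility of the competitor obtained by pushing a separable state through a measure-and-send round --- is asserted (``once this admissibility is checked'') but never actually checked; it is one line and you should write it: if Alice's instrument is $\{\cE_k\}$ and Bob's conditional cptp maps are $\{\mathcal{F}_k\}$, the round acts as $\sum_k \mathcal{F}_k\circ\cE_k$, which sends $\sum_i q_i\,\sigma_i^A\otimes\sigma_i^B$ to $\sum_{i,k} q_i\,\cE_k(\sigma_i^A)\otimes\mathcal{F}_k(\sigma_i^B)$, manifestly separable. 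Second, your claim that ``the bipartite cut itself is preserved'' under classical communication is not quite right --- the classical register does change sides --- and it is precisely either the dephasing argument or the absorption of the communication into Bob's conditional maps (as just described) that handles this. Once each round is recognized as a separability-preserving cptp map, the same recognition applied to the entire protocol at once collapses your induction into the short global argument above.
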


We now show that for $\alpha \in (1,2]$,
RREE satisfies a stronger condition than LOCC monotonicity
(see Sec XV.B.1 in Ref. \cite{entanglement-horo-2009} and also Refs.
\cite{vidal-2000, christandl-winter-2004})). More specifically, this
condition states that RREE does not increase on average under the
action of LOCC.

\begin{lemma}
For $\alpha \in (1,2]$, for any quantum state $\rho^{AB}$ and any unilocal
quantum instrument performed without loss of generality over subsystem $A$
($\cE_k : A \to A^\prime$) - the $\cE_k$ are completely
positive maps and their sum is trace preserving - and
orthogonal states $\{ \ketbra{k}^X \}$, and $\tau^{XA^\prime B} = \sum_x p_k \ketbra{k}^X
\otimes \theta_k^{A^\prime B}$, $p_k = \tr \cE_k(\rho^{AB})$, $\theta_k^{A^\prime B} = 
\cE_k(\rho^{AB})/p_k$, we have
\beq
\ere{\alpha}(A:B)_{\rho} \geq \ere{\alpha}(XA^\prime:B)_\tau \geq
\sum_k p_k \ere{\alpha}(A^\prime:B)_{\theta_k}.
\enq
\end{lemma}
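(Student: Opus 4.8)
The plan is to prove the two inequalities separately, the first by a single application of monotonicity and the second by a pinching-plus-optimization argument. For the first inequality I would package the instrument and its classical flag into the single map $\Lambda^{A\to XA^\prime}(\omega)=\sum_k\ketbra{k}^X\otimes\cE_k(\omega)$. Since each $\cE_k$ is completely positive and $\sum_k\cE_k$ is trace preserving, $\Lambda$ is a cptp map acting only on the $A$ side and sending $\rho^{AB}$ to $\tau^{XA^\prime B}$. Applying Lemma~\ref{lemma1} in its local-on-$A$ form then gives $\ere{\alpha}(A:B)_\rho\ge\ere{\alpha}(XA^\prime:B)_\tau$ directly.

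For the second inequality I would first exploit that $\tau$ is block diagonal in $X$. Let $\mathcal{P}(\cdot)=\sum_k(\ketbra{k}^X\otimes\eye^{A^\prime B})(\cdot)(\ketbra{k}^X\otimes\eye^{A^\prime B})$ be the channel that dephases the $X$ register. Because $\mathcal{P}$ acts locally on the $XA^\prime$ side it maps $\sep^{XA^\prime:B}$ into itself, and it fixes $\tau$. Hence, for every separable $\sigma^{XA^\prime B}$, the monotonicity of $S_\alpha$ under cptp maps recalled above gives $S_\alpha(\tau\|\sigma)\ge S_\alpha(\tau\|\mathcal{P}(\sigma))$, where $\mathcal{P}(\sigma)=\sum_k r_k\ketbra{k}^X\otimes\sigma_k$ with $r_k\ge0$, $\sum_k r_k=1$, and each $\sigma_k\in\sep^{A^\prime:B}$. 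It therefore suffices to lower bound $S_\alpha(\tau\|\mathcal{P}(\sigma))$ over all probability vectors $\{r_k\}$ and separable blocks $\{\sigma_k\}$. Using the common block structure, $\tr\tau^\alpha\mathcal{P}(\sigma)^{1-\alpha}=\sum_k p_k^\alpha r_k^{1-\alpha}\tr\theta_k^\alpha\sigma_k^{1-\alpha}$, and for $\alpha>1$, where $\tfrac{1}{\alpha-1}\log$ is increasing, the definition of the RREE gives $\tr\theta_k^\alpha\sigma_k^{1-\alpha}\ge 2^{(\alpha-1)\ere{\alpha}(A^\prime:B)_{\theta_k}}$ for each separable $\sigma_k$.

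The remaining, and main, step is the optimization over the weights $\{r_k\}$: the infimum defining $\ere{\alpha}(XA^\prime:B)_\tau$ is free to choose $r_k\ne p_k$ to make the divergence small, so one cannot simply substitute $r_k=p_k$. Writing $E_k=\ere{\alpha}(A^\prime:B)_{\theta_k}$ and $c_k=p_k^\alpha 2^{(\alpha-1)E_k}$, I would minimize $\sum_k c_k r_k^{1-\alpha}$ over probability vectors; since $1-\alpha<0$ the objective blows up at the boundary, so the minimizer is interior, and a Lagrange-multiplier computation gives $r_k\propto c_k^{1/\alpha}$ with minimal value $\bigl(\sum_k c_k^{1/\alpha}\bigr)^\alpha$. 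Substituting $c_k^{1/\alpha}=p_k\,2^{\frac{\alpha-1}{\alpha}E_k}$ then yields
\[
\ere{\alpha}(XA^\prime:B)_\tau\ \ge\ \frac{\alpha}{\alpha-1}\log\sum_k p_k\, 2^{\frac{\alpha-1}{\alpha}E_k}.
\]

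Finally, since $\frac{\alpha-1}{\alpha}>0$ for $\alpha\in(1,2]$, Jensen's inequality applied to the convex map $x\mapsto 2^{\frac{\alpha-1}{\alpha}x}$ gives $\sum_k p_k\,2^{\frac{\alpha-1}{\alpha}E_k}\ge 2^{\frac{\alpha-1}{\alpha}\sum_k p_k E_k}$, and taking $\frac{\alpha}{\alpha-1}\log$ of both sides delivers $\ere{\alpha}(XA^\prime:B)_\tau\ge\sum_k p_k E_k$, as required. The one point to watch throughout is that each reduction stays a genuine lower bound on the infimum: the dephasing step lower bounds $S_\alpha(\tau\|\sigma)$ for \emph{every} $\sigma$, while the $\sigma_k$- and $r_k$-optimizations range over supersets of the configurations that actually arise, so all the inequalities point in the direction needed to bound $\ere{\alpha}(XA^\prime:B)_\tau$ from below.
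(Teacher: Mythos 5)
Your proof is correct, and its skeleton coincides with the paper's: the first inequality is handled identically (bundle the instrument into the single local cptp map $\omega\mapsto\sum_k\ketbra{k}^X\otimes\cE_k(\omega)$ and invoke Lemma~\ref{lemma1}), and for the second inequality the paper uses exactly your dephasing map (there called $\cD$, with Kraus operators $\{\ketbra{k}^X\otimes\eye^{A^\prime B}\}$), the same observation that it fixes $\tau$ and sends separable states to block-diagonal states with separable blocks, the same monotonicity step, and the same enlargement of the feasible set to arbitrary weights $\{r_k\}$ and separable blocks $\{\sigma_k\}$. Where you genuinely diverge is the endgame. The paper lower-bounds $\frac{1}{\alpha-1}\log\sum_k p_k (p_k/q_k)^{\alpha-1}Q_\alpha(\theta_k^{A^\prime B}\|\sigma_k^{A^\prime B})$ by concavity of the logarithm, which splits the expression into $S(\underline{p}\|\underline{q})+\sum_k p_k S_\alpha(\theta_k^{A^\prime B}\|\sigma_k^{A^\prime B})$ and then discards the nonnegative classical relative entropy $S(\underline{p}\|\underline{q})$ --- short, and no optimization over weights is ever solved. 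You instead first insert the per-block bound $\tr\theta_k^\alpha\sigma_k^{1-\alpha}\geq 2^{(\alpha-1)E_k}$ and then solve the weight minimization exactly by Lagrange multipliers (your computation $r_k\propto c_k^{1/\alpha}$ with minimum $(\sum_k c_k^{1/\alpha})^\alpha$ is right, and the minimizer is indeed the global one since the objective is convex), which yields the intermediate bound
\begin{equation}
\ere{\alpha}(XA^\prime:B)_\tau\ \geq\ \frac{\alpha}{\alpha-1}\log\sum_k p_k\, 2^{\frac{\alpha-1}{\alpha}\ere{\alpha}(A^\prime:B)_{\theta_k}},
\end{equation}
before Jensen's inequality (correctly applied to the convex map $x\mapsto 2^{\frac{\alpha-1}{\alpha}x}$, in the right direction) recovers the claimed arithmetic mean $\sum_k p_k\ere{\alpha}(A^\prime:B)_{\theta_k}$. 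What your route buys is this strictly stronger intermediate statement --- an exponential-mean rather than arithmetic-mean lower bound, which identifies the exact value of the relaxed infimum and could be quoted as a sharper lemma; what the paper's route buys is brevity, since one Jensen step on the logarithm avoids computing any minimizer at all. You were also right to flag that one cannot simply set $r_k=p_k$; both arguments are careful on precisely this point.
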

\begin{proof}
The first inequality follows from Lemma
\ref{lemma1} where the local cptp map ($A \to X A^\prime$)
in question is $\sum_k \ket{k}^X \otimes
\cE_k$. To make the notation clear, let
$\cE_k(\rho^{AB}) = \sum_j E_{jk} \rho^{AB} E_{jk}^\dagger$, where
$\sum_{j,k} E_{jk}^\dagger E_{jk} = \eye$. Then the Kraus operators of the
above cptp map are $\{ \ket{k}^X \otimes E_{jk} \}$.

We now prove the second inequality.
Define a quantum operation $\cD: XA^\prime B \to XA^\prime B$ with the
Kraus operators $\{\ketbra{k}^X \otimes \eye^{A^\prime B}\}$. We note that the
output of $\cD$ for any input is a cq (classical quantum) state which is
classical on $X$ and $\cD$ does not alter the state $\tau^{XA^\prime B}$.
Furthermore, for any $\sigma^{XA^\prime B} \in \sep^{XA^\prime :B}$, if
$\cD(\sigma^{XA^\prime B})$
$=$ $\sum_k q_k \ketbra{k}^X \otimes \sigma_k^{A^\prime B}$, where $\{q_k\}$ is a probability
vector, then $\sigma_k^{A^\prime B} \in \sep^{A^\prime :B}$ for all $k$.
We now have
\begin{align}
\ere{\alpha} & (XA^\prime :B)_{\tau} \nonumber \\
& = \inf_{\sigma^{XA^\prime B} \in \sep^{XA^\prime :B}} S_\alpha(\tau^{XA^\prime B} ||
\sigma^{XA^\prime B}) \\
& \geq
\inf_{\sigma^{XA^\prime B} \in \sep^{XA^\prime :B}}
S_\alpha \left[ \cD(\tau^{XA^\prime B}) || \cD(\sigma^{XA^\prime B})
\right] \\
& \geq \inf_{ \{q_k\}, \{ \sigma_k^{A^\prime B} \in \sep^{A^\prime:B} \} }
S_\alpha \left[ \tau^{XA^\prime B} ||
\sum_k q_k \ketbra{k}^X \otimes \sigma_k^{A^\prime B} \right] \\
& = \inf_{ \{q_k\}, \{ \sigma_k^{A^\prime B} \in \sep^{A^\prime:B} \} } \frac{1}{\alpha-1} \log \left[
\sum_k p_k \left( \frac{p_k}{q_k} \right)^{\alpha-1}
Q_\alpha(\theta_k^{A^\prime B} || \sigma_k^{A^\prime B}) \right] \\
& \geq \inf_{ \{q_k\}, \{ \sigma_k^{A^\prime B} \in \sep^{A^\prime:B} \} }
\left[ S(\underline{p} || \underline{q})
+ \sum_k p_k S_\alpha (\theta_k^{A^\prime B} || \sigma_k^{A^\prime B}) \right] \\
& \geq \sum_k p_k \ere{\alpha}(A^\prime:B)_{\theta_k},
\end{align}
where the first inequality follows because of monotonicity under cptp maps,
the second inequality follows because we are minimizing over a bigger set,
the third inequality follows because of the concavity of the logarithm,
$S(\underline{p} || \underline{q}) = \sum_k p_k \log(p_k/q_k)$
is the classical relative entropy from probability vectors
$\underline{p}$ to $\underline{q}$, and is non-negative and zero if and only if
$\underline{p} = \underline{q}$ (see Ref. \cite{covertom}),
and the last inequality follows by choosing
the minimizing $\underline{q}$ and minimizing $\{ \sigma_k^{AB} \}$.
\end{proof}

We now generalize Theorem 4.7 in Ref. \cite{petz-statistics-book}.
\begin{lemma}
\label{lemma4}
For any bipartite state $\rho^{AB}$ and $\alpha \in [0,2] \backslash \{1\}$,
we have
\begin{align}
\label{yae4}
\ere{\alpha}(A:B)_\rho \geq \max \left\{ I_\alpha(A \rangle B)_\rho,
I_\alpha(B \rangle A)_\rho \right\}.
\end{align}
For a pure state $\ket{\Psi}^{AB}$, we have
\beq
\label{yae5}
S_{1/\alpha}(A)_\Psi \leq \ere{\alpha}(A:B)_\Psi \leq S_{2-\alpha}(A)_\Psi.
\enq
\end{lemma}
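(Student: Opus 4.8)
The plan is to prove the mixed-state bound \eqref{yae4} first and then read off the pure-state sandwich \eqref{yae5} from it. The bridge I would use is the variational identity
\beq
I_\alpha(A \rangle B)_\rho = \min_{\sigma^A \in \cS(\cH_A)} S_\alpha\!\left( \rho^{AB} \,\|\, \sigma^A \otimes \eye^B \right),
\enq
valid for $\alpha \in [0,2]\backslash\{1\}$. This follows by writing $S_\alpha(\rho^{AB}\|\sigma^A\otimes\eye^B)=\tfrac{1}{\alpha-1}\log \tr_A\!\big[(\tr_B(\rho^{AB})^\alpha)\,(\sigma^A)^{1-\alpha}\big]$ and optimizing the inner trace over states $\sigma^A$; the extremum sits at $\sigma^A \propto (\tr_B(\rho^{AB})^\alpha)^{1/\alpha}$, and substituting it reproduces exactly $\tfrac{\alpha}{\alpha-1}\log\tr[\tr_B(\rho^{AB})^\alpha]^{1/\alpha}$. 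That this stationary point really minimizes $S_\alpha$ (maximizing the trace for $\alpha<1$, minimizing it for $\alpha>1$) is the standard reverse-H\"older/extremal property of $\tr X^{1/\alpha}$, which I would record as a short sub-lemma since the direction flips across $\alpha=1$.

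The heart of the argument is a single operator inequality: for every separable $\sigma^{AB}\in\sep^{A:B}$ with marginal $\sigma^A=\tr_B\sigma^{AB}$,
\beq
\sigma^{AB} \leq \sigma^A \otimes \eye^B .
\enq
I would prove this by taking a product decomposition $\sigma^{AB}=\sum_k q_k\,\tau_k^A\otimes\omega_k^B$ and using $\omega_k^B\leq\eye^B$ termwise, so $\sum_k q_k\,\tau_k^A\otimes\omega_k^B \leq \sum_k q_k\,\tau_k^A\otimes\eye^B=\sigma^A\otimes\eye^B$; this genuinely uses separability, as it fails for a maximally entangled state whose marginal is maximally mixed. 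Next I would note that $S_\alpha(\rho^{AB}\|\cdot)$ is antitone in the operator order of its second argument, uniformly for $\alpha\in[0,2]\backslash\{1\}$: if $\sigma_1\leq\sigma_2$ then $\tr(\rho^{AB})^\alpha\sigma_1^{1-\alpha}\geq\tr(\rho^{AB})^\alpha\sigma_2^{1-\alpha}$, because $x\mapsto x^{1-\alpha}$ is operator monotone decreasing for $\alpha\in(1,2]$ and increasing for $\alpha\in[0,1)$, and in each case this combines with the sign of $1/(\alpha-1)$ to give $S_\alpha(\rho\|\sigma_1)\geq S_\alpha(\rho\|\sigma_2)$. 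Chaining the two facts, $S_\alpha(\rho^{AB}\|\sigma^{AB})\geq S_\alpha(\rho^{AB}\|\sigma^A\otimes\eye^B)\geq I_\alpha(A\rangle B)_\rho$ for every separable $\sigma^{AB}$; taking the infimum gives $\ere{\alpha}(A:B)_\rho\geq I_\alpha(A\rangle B)_\rho$, and the symmetric inequality $\sigma^{AB}\leq\eye^A\otimes\sigma^B$ yields the $I_\alpha(B\rangle A)_\rho$ term, establishing \eqref{yae4}.

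For the pure-state sandwich \eqref{yae5}, I would fix the Schmidt decomposition $\ket{\Psi}^{AB}=\sum_i\sqrt{\lambda_i}\,\ket{i}^A\ket{i}^B$. The lower bound is immediate from \eqref{yae4}: since $\Psi$ is a projector, $\Psi^\alpha=\Psi$ and $\tr_B\Psi=\rho^A$, so $I_\alpha(A\rangle B)_\Psi=\tfrac{\alpha}{\alpha-1}\log\sum_i\lambda_i^{1/\alpha}=S_{1/\alpha}(A)_\Psi$, giving $S_{1/\alpha}(A)_\Psi\leq\ere{\alpha}(A:B)_\Psi$. For the upper bound I would exhibit the separable test state $\sigma=\sum_i\lambda_i\,\ketbra{i}^A\otimes\ketbra{i}^B$ obtained by dephasing in the Schmidt basis; a direct computation gives $\bra{\Psi}\sigma^{1-\alpha}\ket{\Psi}=\sum_i\lambda_i^{2-\alpha}$, hence $S_\alpha(\Psi\|\sigma)=\tfrac{1}{\alpha-1}\log\sum_i\lambda_i^{2-\alpha}=S_{2-\alpha}(A)_\Psi$, and taking the infimum over $\sep^{A:B}$ yields $\ere{\alpha}(A:B)_\Psi\leq S_{2-\alpha}(A)_\Psi$.

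The main obstacle I anticipate is bookkeeping across the two regimes rather than any deep inequality: the optimization direction in the variational identity, the operator monotonicity of $x\mapsto x^{1-\alpha}$, and the sign of $1/(\alpha-1)$ all reverse at $\alpha=1$, yet they conspire to produce a single uniform conclusion, so I would verify each reversal carefully. The one genuinely technical input is the extremal (reverse-H\"older) characterization underlying the min-formula, which I would isolate as a lemma; everything else is the clean operator inequality for separable states plus antitonicity of $S_\alpha(\rho\|\cdot)$, with no appeal to monotonicity under cptp maps needed.
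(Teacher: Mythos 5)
Your proposal is correct and takes essentially the same route as the paper: the separable-state operator inequality $\sigma^{AB}\leq\sigma^A\otimes\eye^B$, the operator monotonicity of $x\mapsto x^{1-\alpha}$ (with the sign of $1/(\alpha-1)$ flipping in tandem across $\alpha=1$), the Sibson-type variational identity for $I_\alpha(A\rangle B)_\rho$ (which the paper invokes by citation), and the Schmidt-basis dephased state as the test state for the upper bound in \eqref{yae5}. The only differences are expository --- you supply proofs of the sub-steps (the separability inequality, the extremal characterization) that the paper asserts or cites.
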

\begin{proof}

We first prove \eqref{yae4}.
We first note that for a separable state $\sigma^{AB} \in \sep^{A:B}$,
$\sigma^{AB} \leq \sigma^A \otimes \eye^B$.

For $\alpha \in (1,2]$, invoke the operator
monotonicity of $x \mapsto x^{\alpha-1}$,
and the fact that $A \geq B$ implies $A^{-1} \leq B^{-1}$ 
(replacing inverses by generalized inverses if the matrices are singular) to
have $(\sigma^{AB})^{1-\alpha} \geq (\sigma^A)^{1-\alpha} \otimes \eye^B$.
For $\alpha \in [0,1)$, invoke the operator monotonicity of $x \mapsto
x^\alpha$ to have
$(\sigma^{AB})^{1-\alpha} \leq (\sigma^A)^{1-\alpha} \otimes \eye^B$.

Using this, we have for $\alpha \in [0,2] \backslash \{1\}$,
\begin{align}
\ere{\alpha}(A:B)_\rho & \geq \inf_{\sigma^A}
\frac{1}{\alpha-1} \log \tr \left[ \tr_B (\rho^{AB})^\alpha \right]
(\sigma^{A})^{1-\alpha} \\
\label{yyae1}
& = \frac{\alpha}{\alpha-1} \log \tr
\left[ \tr_B (\rho^{AB})^\alpha \right]^{1/\alpha} \\
& = I_\alpha(A \rangle B),
\end{align}
where the first equality follows from the Sibson's identity (see the 
supplementary material of Ref. \cite{sharma-prl-2013}).
Arguing similarly by using $\sigma^{AB} \leq \eye^A \otimes \sigma^B$, we get
$\ere{\alpha}(A:B)_\rho \geq I_\alpha(B \rangle A)$.

We now prove \eqref{yae5}. For a pure state $\ket{\Psi}^{AB}$,
$\rho^A = \tr_B \ketbra{\Psi}^{AB}$, and using \eqref{yyae1}, we get
\beq
\ere{\alpha}(A:B)_\rho \geq \frac{\alpha}{\alpha-1} \log \tr (\rho^A)^{1/\alpha}
= S_{1/\alpha}(A)_\rho.
\enq
To prove the upper bound, choose
$\sigma^{AB} = \sum_i p_i \ketbra{i}^A \otimes \ketbra{i}^B$, where
\linebreak
$\ket{\Psi}^{AB} = \sum_i \sqrt{p_i} \ket{i}^A \ket{i}^B$ is the Schmidt 
decomposition.
\end{proof}

We now have the following inequality by van Dam and Hayden
\cite{dam-hayden-2008}.

\begin{lemma}[van Dam and Hayden \cite{dam-hayden-2008}]
\label{lemma11}
Let $F(\rho,\sigma) \geq F$. For $\alpha \in [0.5,1)$, the following holds:
\beq
S_\alpha(\rho) \geq S_\beta(\sigma) + \frac{2 \alpha}{1 - \alpha} \log F,
\enq
where $\beta = \infty$ if $\alpha = 0.5$ and $\beta = \alpha/(2 \alpha - 1)$ otherwise.
\end{lemma}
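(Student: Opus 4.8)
The plan is to recognize that, once exponentiated and rearranged, the claimed bound is nothing more than the noncommutative H\"{o}lder inequality for Schatten norms applied to $\sqrt{\rho}$ and $\sqrt{\sigma}$. The natural substitution is $p = 2\alpha$, whose H\"{o}lder conjugate $q$, defined by $1/p + 1/q = 1$, works out to $q = 2\alpha/(2\alpha-1) = 2\beta$; for $\alpha \in [0.5,1)$ this gives $p \in [1,2)$ and $q \in (2,\infty]$, with the endpoint $p=1$, $q=\infty$ occurring exactly at $\alpha = 0.5$. The key observation that makes everything line up is that the Schatten $p$-norm of $\sqrt{\rho}$ is $\| \sqrt{\rho} \|_p = (\tr \rho^{p/2})^{1/p} = (\tr \rho^\alpha)^{1/(2\alpha)}$, that likewise $\| \sqrt{\sigma} \|_q = (\tr \sigma^\beta)^{1/(2\beta)}$, and that the fidelity is by definition $F(\rho,\sigma) = \| \sqrt{\rho}\sqrt{\sigma} \|_1$. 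The whole content of the lemma is that the exponent $\beta = \alpha/(2\alpha-1)$ is precisely the value for which $2\alpha$ and $2\beta$ are H\"{o}lder conjugate, which is what couples the two R\'{e}nyi entropies to the two Schatten norms.

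Carrying this out, I would first apply $\| XY \|_1 \leq \| X \|_p \| Y \|_q$ with $X = \sqrt{\rho}$ and $Y = \sqrt{\sigma}$, obtaining $F(\rho,\sigma) \leq (\tr \rho^\alpha)^{1/(2\alpha)} (\tr \sigma^\beta)^{1/(2\beta)}$. Taking logarithms and substituting $\log \tr \rho^\alpha = (1-\alpha) S_\alpha(\rho)$ and $\log \tr \sigma^\beta = (1-\beta) S_\beta(\sigma)$ turns this into $\log F(\rho,\sigma) \leq \frac{1-\alpha}{2\alpha} S_\alpha(\rho) + \frac{1-\beta}{2\beta} S_\beta(\sigma)$. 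A one-line computation using $\beta = \alpha/(2\alpha-1)$ shows $\frac{1-\beta}{2\beta} = -\frac{1-\alpha}{2\alpha}$, so the bound collapses to $\log F(\rho,\sigma) \leq \frac{1-\alpha}{2\alpha}\left[ S_\alpha(\rho) - S_\beta(\sigma) \right]$. Since $\frac{1-\alpha}{2\alpha} > 0$ on the relevant range, I would multiply through by $2\alpha/(1-\alpha)$ and rearrange to get $S_\alpha(\rho) \geq S_\beta(\sigma) + \frac{2\alpha}{1-\alpha}\log F(\rho,\sigma)$. Finally, because $F(\rho,\sigma) \geq F$ and the prefactor is positive, monotonicity of $\log$ permits replacing $F(\rho,\sigma)$ by $F$ to reach the stated form.

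The only genuinely delicate point, and the step I expect to require care rather than cleverness, is the endpoint $\alpha = 0.5$, where $p = 1$ and $q = \infty$. There one reads $\| \sqrt{\sigma} \|_\infty = \sqrt{\| \sigma \|_\infty}$ and interprets $S_\beta(\sigma)$ as the min-entropy $S_\infty(\sigma) = -\log \| \sigma \|_\infty$; I would verify that the same chain of equalities survives these conventions, noting that $\frac{1-\beta}{2\beta} \to -\tfrac12$ as $\beta \to \infty$ matches $\tfrac12 \log \| \sigma \|_\infty = -\tfrac12 S_\infty(\sigma)$, so the limiting statement $S_{1/2}(\rho) \geq S_\infty(\sigma) + 2\log F$ is exactly H\"{o}lder with $p=1$, $q=\infty$. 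I would also remark that if $F = 0$ the inequality is vacuous, so the logarithms may be taken under the harmless assumption $F > 0$. Beyond this boundary bookkeeping the argument is essentially a single invocation of H\"{o}lder's inequality, so no substantive obstacle remains.
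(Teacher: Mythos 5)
Your proof is correct. Note, however, that the paper does not actually prove this lemma: it is stated as a citation to van Dam and Hayden \cite{dam-hayden-2008} and used as a black box, so there is no in-paper argument to compare against. Your derivation supplies the missing proof, and it is the standard one: the exponents $p=2\alpha$ and $q=2\beta$ satisfy $1/p+1/q=1$ precisely because $\beta=\alpha/(2\alpha-1)$, so the inequality is exactly the noncommutative H\"{o}lder inequality $\|\sqrt{\rho}\sqrt{\sigma}\|_1 \leq \|\sqrt{\rho}\|_{2\alpha}\,\|\sqrt{\sigma}\|_{2\beta}$ combined with the paper's definition $F(\rho,\sigma)=\|\sqrt{\rho}\sqrt{\sigma}\|_1$ and the identities $\log\tr\rho^\alpha=(1-\alpha)S_\alpha(\rho)$, $\frac{1-\beta}{2\beta}=-\frac{1-\alpha}{2\alpha}$; this is essentially the argument in the cited reference. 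Your handling of the endpoint $\alpha=0.5$ (where $q=\infty$ and $S_\beta$ becomes the min-entropy $-\log\|\sigma\|_\infty$), of the positivity of the prefactor $\frac{2\alpha}{1-\alpha}$ when passing from $F(\rho,\sigma)$ to $F$, and of the degenerate case $F=0$ is all sound, so the lemma is fully established.
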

Using this lemma, we derive an inequality for the RREE.

\begin{lemma}
\label{lemma5}
Let $\rho^{AB}$ be a bipartite state and $\ket{\Psi}^{AB}$ be a
pure state such that
$F(\Psi^{AB},\rho^{AB}) \geq F$. Then for $\alpha \in (1,2]$, we have
\begin{align}
\label{dummy3}
\ere{\alpha}(A:B)_\rho & \geq \frac{2\alpha}{\alpha-1} \log F +
S_{\frac{1}{2-\alpha}}(A)_\Psi.
\end{align}
 \end{lemma}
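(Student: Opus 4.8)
The plan is to combine the lower bound on the RREE for a pure state (the left inequality of \eqref{yae5}) with the van Dam–Hayden fidelity inequality (Lemma \ref{lemma11}), bridging the two by the Rényi relative entropy data-processing / monotonicity structure already established. The key observation is that the quantity we wish to lower bound, $\ere{\alpha}(A:B)_\rho$, concerns the mixed state $\rho^{AB}$, whereas $S_{\frac{1}{2-\alpha}}(A)_\Psi$ is a function of the pure state $\ket{\Psi}^{AB}$; the fidelity hypothesis $F(\Psi^{AB},\rho^{AB}) \geq F$ is precisely what should allow us to transfer an entropic bound from $\Psi$ to $\rho$.

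First I would apply the left inequality in \eqref{yae5} not to $\Psi$ but to $\rho$ via \eqref{yae4}: for $\alpha \in (1,2]$ we have $\ere{\alpha}(A:B)_\rho \geq I_\alpha(A\rangle B)_\rho$, and more usefully $\ere{\alpha}(A:B)_\rho \geq S_{1/\alpha}(A)_\rho$ when specialized appropriately — but since $\rho$ need not be pure, the cleanest route is to lower bound $\ere{\alpha}(A:B)_\rho$ by a Rényi entropy $S_\gamma(A)_\rho$ of the reduced state $\rho^A$ for a suitable order $\gamma$ depending on $\alpha$. The natural candidate, reading off the Sibson-identity computation \eqref{yyae1}, is $\ere{\alpha}(A:B)_\rho \geq S_{1/\alpha}(A)_\rho$. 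With such a bound in hand, the problem reduces to relating $S_{1/\alpha}(A)_\rho$ to $S_{\frac{1}{2-\alpha}}(A)_\Psi$ using the fidelity constraint.

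Next I would invoke Lemma \ref{lemma11} with the roles chosen so that the orders match. Since $\alpha \in (1,2]$, the order $1/\alpha$ lies in $[1/2,1)$, exactly the range $[0.5,1)$ where Lemma \ref{lemma11} applies. Setting the lemma's parameter to $1/\alpha$, its conjugate order is $\beta = (1/\alpha)\big/(2/\alpha - 1) = 1/(2-\alpha)$, which is precisely the order $\frac{1}{2-\alpha}$ appearing on the right of \eqref{dummy3}. Applying Lemma \ref{lemma11} to the pair $(\rho^A, \Psi^A)$ — whose fidelity is at least $F$ since fidelity is monotone under the partial trace, so $F(\rho^A,\Psi^A) \geq F(\rho^{AB},\Psi^{AB}) \geq F$ — yields $S_{1/\alpha}(\rho^A) \geq S_{\frac{1}{2-\alpha}}(\Psi^A) + \frac{2(1/\alpha)}{1-1/\alpha}\log F$, and the coefficient $\frac{2/\alpha}{1-1/\alpha} = \frac{2}{\alpha-1}$ reproduces the factor $\frac{2\alpha}{\alpha-1}$ after accounting for the $S_{1/\alpha}(A)_\rho$ coefficient in the RREE bound. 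Chaining the two inequalities gives \eqref{dummy3}.

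The main obstacle I anticipate is matching the numerical coefficients exactly, in particular reconciling the factor $\frac{2\alpha}{\alpha-1}\log F$ in the statement with the factor $\frac{2}{\alpha-1}\log F$ that the direct application of Lemma \ref{lemma11} produces. This suggests that the correct RREE lower bound to use is not $S_{1/\alpha}(A)_\rho$ but a rescaled version carrying an extra factor of $\alpha$, so I would verify whether \eqref{yyae1} should be read as $\ere{\alpha}(A:B)_\rho \geq \frac{\alpha}{\alpha-1}\log \tr[\rho^A]^{1/\alpha}$ and whether the van Dam–Hayden bound must be applied with a compensating $\alpha$ factor, i.e. whether the entropy orders or the logarithmic prefactors need rescaling. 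I would also need to confirm that $\Psi^A$ and $\rho^A$ are genuinely the states to which Lemma \ref{lemma11} is applied (rather than the full bipartite states), and that the monotonicity of fidelity under partial trace is invoked correctly; getting these bookkeeping details right, rather than any conceptual difficulty, is where the care is required.
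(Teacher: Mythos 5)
Your overall architecture (a Sibson-identity lower bound on $\ere{\alpha}$, then Lemma \ref{lemma11} with orders $1/\alpha$ and $1/(2-\alpha)$, then fidelity monotonicity under partial trace) is the right skeleton, but the pivot you rest it on is false: for a general mixed state $\rho^{AB}$, the inequality $\ere{\alpha}(A:B)_\rho \geq S_{1/\alpha}(A)_\rho$ does not hold for any choice of order on the right-hand side. Equation \eqref{yyae1} lower-bounds $\ere{\alpha}(A:B)_\rho$ by $\frac{\alpha}{\alpha-1}\log \tr \left[ \tr_B (\rho^{AB})^\alpha \right]^{1/\alpha}$, and the operator $\tr_B \left[ (\rho^{AB})^\alpha \right]$ coincides (up to normalization) with $\rho^A$ only when $\rho^{AB}$ is pure --- that is precisely how \eqref{yae5} is obtained, and it is the only place such a reading is legitimate. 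For a concrete counterexample to your intermediate claim, take $\rho^{AB} = \eye^A/d \otimes \eye^B/d$: this state is separable, so $\ere{\alpha}(A:B)_\rho = 0$, yet $S_{1/\alpha}(A)_\rho = \log d > 0$. Since the lemma is specifically about mixed $\rho^{AB}$ (it is the output of an LOCC protocol in all applications), this is not a corner case but the whole content of the statement.

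The coefficient mismatch you flagged ($\frac{2}{\alpha-1}\log F$ from your route versus $\frac{2\alpha}{\alpha-1}\log F$ in the statement) is not bookkeeping; it is the symptom of the missing idea. The paper keeps the Sibson bound in the form $\ere{\alpha}(A:B)_\rho \geq S_{1/\alpha}(A)_\sigma - S_\alpha(AB)_\rho$, where $\sigma^A = \tr_B (\rho^{AB})^\alpha / \tr (\rho^{AB})^\alpha$ is a tilted state, and applies Lemma \ref{lemma11} to the pair $(\sigma^A, \Psi^A)$ --- your order bookkeeping $1/\alpha \mapsto 1/(2-\alpha)$ and the coefficient $\frac{2/\alpha}{1-1/\alpha} = \frac{2}{\alpha-1}$ are correct at that step. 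The extra work, which your proposal has no analogue of, is to lower-bound $\log F(\sigma^A,\Psi^A)$ in terms of $\log F$: by monotonicity under partial trace and the estimate $\brackett{\Psi}{\rho^\alpha}{\Psi} \geq \left( \brackett{\Psi}{\rho}{\Psi} \right)^\alpha$ (convexity of $x \mapsto x^\alpha$ for $\alpha \in (1,2]$), one gets $\log F(\sigma^A,\Psi^A) \geq \alpha \log F - \frac{1}{2}\log \tr (\rho^{AB})^\alpha$. The factor $\alpha$ here is exactly the gap between your coefficient and the stated one, and the normalization term $\tr(\rho^{AB})^\alpha$ is exactly what cancels the $-S_\alpha(AB)_\rho$ left over from the Sibson bound. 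So the repair is not to rescale your inequality but to replace $\rho^A$ by $\sigma^A$ throughout and add this fidelity-transfer step.
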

\begin{proof}
We first note from \eqref{yae4} in Lemma \ref{lemma4} that
\beq
\label{yae6}
\ere{\alpha}(A:B)_\rho \geq I_\alpha(A \rangle B)_\rho.
\enq
Define
\beq
\sigma^{A} = \frac{ \tr_B (\rho^{AB})^\alpha }{ \tr (\rho^{AB})^\alpha }.
\enq
Then \eqref{yae6} can be written as
\beq
\label{yae7}
\ere{\alpha}(A:B)_\rho \geq S_{\frac{1}{\alpha}}(A)_\sigma
- S_{\alpha}(AB)_\rho.
\enq
We note that
\begin{align}
\log F(\sigma^A, \Psi^A) & \geq \log F(\sigma^{AB}, \Psi^{AB}) \\
& = \log F \left[ (\rho^{AB})^\alpha, \Psi^{AB} \right] - \frac{1}{2} \log \tr
(\rho^{AB})^\alpha \\
& \geq \alpha \log F(\rho^{AB}, \Psi^{AB})-\frac{1}{2} \log \tr (\rho^{AB})^\alpha \\
\label{yae8}
& \geq \alpha \log F - \frac{1}{2} \log \tr (\rho^{AB})^\alpha,
\end{align}
where the first inequality follows from the monotonicity under partial
trace, the equality follows since
$F(\rho,\sigma) = \tr \sqrt{\sqrt{\sigma} \rho \sqrt{\sigma}}$, the second
inequality follows since $\brackett{\Psi}{\rho^\alpha}{\Psi}^{AB}$
$\geq$ $(\brackett{\Psi}{\rho}{\Psi}^{AB})^\alpha$. Using \eqref{yae8},
we have
\beq
\label{yae9}
\frac{2}{\alpha-1} \log F(\sigma^A, \Psi^A) \geq
\frac{2 \alpha}{\alpha-1} \log F +  S_{\alpha}(AB)_\rho.
\enq
We now have from \eqref{yae7}
\begin{align}
\ere{\alpha}(A:B)_\rho & \geq S_{\frac{1}{\alpha}}(A)_\sigma
- S_{\alpha}(AB)_\rho \\
& \geq \frac{2}{\alpha-1} \log F(\sigma^A,\Psi^A) +
S_{\frac{1}{2-\alpha}}(A)_\Psi - S_{\alpha}(AB)_\rho \\
& \geq \frac{2 \alpha}{\alpha-1} \log F + S_{\frac{1}{2-\alpha}}(A)_\Psi,
\end{align}
where the second inequality follows from Lemma \ref{lemma11},
the third inequality follows from \eqref{yae9} and the claim follows.
\end{proof}

\section{Strong converse for  quantum state merging}

The following definition of the quantum state merging is almost
the same as in Ref. \cite{state-merging-2005}.

\begin{definition}[State-merging]
Let a pure state $\ket{\Psi}^{ABR}$ be shared between Alice ($A$) and
Bob ($B$). Let Alice and Bob have quantum registers $A_0$, $A_1$ and
$B_0$, $B_1$ respectively. A $(\Psi, F)$ {\bf state-merging} is a LOCC 
quantum operation
$\cM : A A_0 \otimes B B_0 \to A_1 \otimes B_1 B^\prime B$ such that
for $\rho^{A_1 B_1 B^\prime B R} =$
$\cM \left( \Psi^{ABR} \otimes \Phi_K^{A_0 B_0} \right)$,
\beq
F \left( \rho^{A_1 B_1 B^\prime B R}, \Phi_L^{A_1 B_1} \otimes
\Psi^{B^\prime B R} \right) \geq F.
\enq
The number $\log K - \log L$ is called the {\bf entanglement cost} of the 
protocol. In case of many copies $\Psi = \psi^{\otimes n}$,
$(\log K - \log L)/n$ is called the {\bf entanglement rate} of the protocol.
A real number $R$ is called an {\bf achievable rate} if there exist, for
$n \to \infty$, state-merging protocols of rate approaching $R$
and $F$ approaching $1$. The smallest achievable rate is the {\bf merging 
cost} of $\psi$.
\end{definition}

A fundamental result in quantum information theory is given in
the next theorem.

\begin{theorem}[Horodecki, Oppenheim and Winter \cite{state-merging-2005}]
For a state $\rho^{AB} = \tr_R \Psi^{ABR}$
shared by Alice and Bob, the merging cost is the
quantum conditional entropy $S(A|B)_\Psi$. If $S(A|B)_\Psi$ is positive, then
$R > S(A|B)_\Psi$ ebits are required per input copy and if it is negative, then
$R < -S(A|B)_\Psi$ ebits are obtained per input copy by the protocol.
\end{theorem}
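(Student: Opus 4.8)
The plan is to establish the strong converse for the \emph{entanglement rate}; achievability is exactly the result of Horodecki, Oppenheim and Winter \cite{state-merging-2005}, so I would supply only the converse half in the sharpened form \eqref{gen-bound}. The whole argument rests on the LOCC-monotonicity of the RREE under the merging operation together with the two one-sided bounds already in hand (Lemma \ref{lemma4} at the input, Lemma \ref{lemma5} at the output). The one nonobvious modeling choice — and the step I expect to be the crux — is \emph{where to place the inaccessible reference $R$}: I would group $R$ with Alice, viewing $\cM$ as an LOCC operation between $(A\text{-lab}\cup R)$ and $(B\text{-lab})$ with $R$ a passive spectator. This placement is precisely what converts a naive $S(A)$-type estimate into the correct conditional-entropy bound (grouping $R$ with Bob instead yields $S(A)_\Psi$ and fails).

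Concretely, write the input as the pure state $\xi_{in}=\Psi^{ABR}\otimes\Phi_K^{A_0B_0}$ and take the cut $(AA_0R):(BB_0)$. Since $\xi_{in}$ is pure, the two sides of the cut have identical reduced spectra, so Lemma \ref{lemma4} supplies the \emph{upper} bound $\ere{\alpha}(AA_0R:BB_0)_{\xi_{in}}\le S_{2-\alpha}(BB_0)_{\xi_{in}}=S_{2-\alpha}(B)_\Psi+\log K$, the $\log K$ coming from $\Phi_K^{A_0B_0}$. For the output $\rho^{A_1B_1B^\prime BR}=\cM(\xi_{in})$ I would use the matching cut $(A_1R):(B_1B^\prime B)$ and observe that the \emph{target} $\Phi_L^{A_1B_1}\otimes\Psi^{B^\prime BR}$ is globally \emph{pure} — this is the observation that makes Lemma \ref{lemma5} applicable. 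Its reduced state on the $A$-side factorizes as $\tfrac1L\eye_L\otimes\rho^R$, so Lemma \ref{lemma5} gives the \emph{lower} bound $\ere{\alpha}(A_1R:B_1B^\prime B)_{\rho}\ge \tfrac{2\alpha}{\alpha-1}\log F+\log L+S_{\frac{1}{2-\alpha}}(R)_\Psi$.

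Chaining these through LOCC-monotonicity of the RREE (cf. Lemma \ref{lemma1}), so that the input RREE dominates the output RREE, and rearranging yields
\beq
\frac{2\alpha}{\alpha-1}\log F \le (\log K-\log L)-\Big[S_{\frac{1}{2-\alpha}}(R)_\Psi-S_{2-\alpha}(B)_\Psi\Big].
\enq
I would then identify the bracket as the \renyi generalization $f_\alpha$ of the conditional entropy. Using $S(R)_\Psi=S(AB)_\Psi$ for the pure state, it tends to $S(A|B)_\Psi$ as $\alpha\to1$; and since $\tfrac{1}{2-\alpha}>1>2-\alpha$ for $\alpha\in(1,2)$ while $S_\gamma$ decreases in $\gamma$, one gets $f_\alpha=S_{\frac{1}{2-\alpha}}(R)_\Psi-S_{2-\alpha}(B)_\Psi\le S(R)_\Psi-S(B)_\Psi=S(A|B)_\Psi$, exactly the property \eqref{gen-bound} demands. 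Applying this to $\psi^{\otimes n}$ (each entropy scales linearly, so $f_\alpha$ is additive) and solving for $\log F$ gives \eqref{gen-bound} with $\zeta=\tfrac12$, $g(n)=\log K-\log L$. The converse then covers both regimes of the theorem at once: whenever the rate obeys $(\log K-\log L)/n<S(A|B)_\Psi$ — consuming fewer ebits than the cost when $S(A|B)>0$, or distilling more than $-S(A|B)$ when $S(A|B)<0$ — choosing $\alpha$ near $1$ makes the bracket negative and the fidelity decays exponentially in $n$. The main obstacle is entirely the reference-placement and cut choice of the first two paragraphs; once these are fixed, the two lemmas slot in mechanically.
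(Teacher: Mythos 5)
Your proposal is correct and follows essentially the same route as the paper's own strong-converse proof for the entanglement rate: the same input cut $(AA_0R):(BB_0)$ bounded above via Lemma \ref{lemma4}, the same output cut $(A_1R):(B_1B^\prime B)$ bounded below via Lemma \ref{lemma5} applied to the pure target state, chained together by LOCC monotonicity of the RREE, with achievability deferred (as the paper also does) to Horodecki--Oppenheim--Winter. Your final bound is literally the paper's, since by purity of $\Psi^{ABR}$ one has $S_{\frac{1}{2-\alpha}}(R)_\Psi = S_{\frac{1}{2-\alpha}}(AB)_\Psi$; the only cosmetic slip is attributing the chaining step to Lemma \ref{lemma1} (local cptp monotonicity) rather than to the Vedral \emph{et al} LOCC-monotonicity lemma, which is what the paper invokes.
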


A converse to the above theorem states that if $\log K - \log L < n S(A|B)_\Psi$,
then the fidelity would be bounded away from $1$.
A strong converse additionally
states under the same conditions that the fidelity would go to $0$ with $n$.

A weak converse is provided in Ref. \cite{state-merging-2005}. Strong
converse for the entanglement rate for this protocol can be construed
through the achievability and strong converse for the Schumacher compression
\cite{andreas-email-apr-2014}. 
Berta also provided a strong converse for the entanglement rate
in Ref. \cite{berta-2009}.

\subsection{Strong converse with LOCC}

We are now ready to provide the converse for the state-merging protocol.

\begin{theorem}[Strong converse for the entanglement rate]
For a $(\Psi,F)$ quantum state merging protocol, the following bound
holds for $\alpha \in (1,2]$
\beq
\log F \leq n \frac{\alpha-1}{2\alpha} \left[ \frac{\log K - \log L}{n} +
S_{2-\alpha}(B)_\Psi - S_{\frac{1}{2-\alpha}}(AB)_\Psi \right].
\enq
\end{theorem}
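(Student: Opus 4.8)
\section*{Proof proposal}

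The plan is to track the \renyi relative entropy of entanglement (RREE) across a single bipartite cut through the whole protocol, sandwiching it between an upper bound coming from the (pure) input state and a lower bound coming from Lemma~\ref{lemma5} applied to the (approximate) output. The one genuinely important decision is where to place the reference $R$: I group it with Alice, so that the relevant cut is $AA_0R : BB_0$ at the input and $A_1R : B_1 B^\prime B$ at the output. Since the merging map $\cM$ acts only on $AA_0 \,|\, BB_0$ and never touches $R$, the map $\cM \otimes \mathrm{id}_R$ is LOCC across this cut, so LOCC-monotonicity of the RREE (the lemma of Vedral \emph{et al}\ recalled above) gives
\[
\ere{\alpha}(AA_0R : BB_0)_{\Psi \otimes \Phi_K} \geq \ere{\alpha}(A_1R : B_1 B^\prime B)_{\rho}.
\]

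For the input bound, the state $\Psi^{ABR}\otimes\Phi_K^{A_0B_0}$ is pure on the entire system, so the upper bound in \eqref{yae5} of Lemma~\ref{lemma4} applies with Alice's side $AA_0R$ playing the role of the distinguished subsystem $A$:
\[
\ere{\alpha}(AA_0R : BB_0)_{\Psi\otimes\Phi_K} \leq S_{2-\alpha}(AA_0R) = S_{2-\alpha}(AR)_\Psi + \log K = S_{2-\alpha}(B)_\Psi + \log K,
\]
where the first equality uses additivity of the \renyi entropy over the tensor factor $\Phi_K$ (whose $A_0$-marginal is maximally mixed of rank $K$) and the second uses that $AR$ and $B$ are complementary subsystems of the pure state $\Psi^{ABR}$, hence $S_\beta(AR)_\Psi = S_\beta(B)_\Psi$.

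For the output bound, I apply Lemma~\ref{lemma5} to the bipartite state $\rho^{A_1 R\,B_1 B^\prime B}$ with cut $A_1R : B_1 B^\prime B$ and pure comparison state $\Phi_L^{A_1 B_1}\otimes\Psi^{B^\prime B R}$, whose fidelity with $\rho$ is at least $F$ by the definition of the protocol. This yields
\[
\ere{\alpha}(A_1R : B_1 B^\prime B)_\rho \geq \frac{2\alpha}{\alpha-1}\log F + S_{\frac{1}{2-\alpha}}(A_1R)_{\Phi_L\otimes\Psi},
\]
and the distinguished marginal evaluates to $S_{\frac{1}{2-\alpha}}(A_1R) = \log L + S_{\frac{1}{2-\alpha}}(R)_\Psi = \log L + S_{\frac{1}{2-\alpha}}(AB)_\Psi$, again by additivity (the $A_1$-marginal of $\Phi_L$ is maximally mixed of rank $L$) and by the complementarity of $R$ with $AB$ in the pure state $\Psi^{B^\prime B R}\cong\Psi^{ABR}$.

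Chaining the three displays and using $\alpha\in(1,2]$, so that $(\alpha-1)/(2\alpha)>0$, I solve for $\log F$ to obtain
\[
\log F \leq \frac{\alpha-1}{2\alpha}\left[\log K - \log L + S_{2-\alpha}(B)_\Psi - S_{\frac{1}{2-\alpha}}(AB)_\Psi\right];
\]
for $\Psi=\psi^{\otimes n}$ the additivity of all the \renyi entropies lets me pull out a factor $n$ and rewrite this in the stated per-copy form. The step I expect to be the crux is the placement of $R$ on Alice's side: it is exactly this choice that, through the two complementarity relations $S_\beta(AR)_\Psi=S_\beta(B)_\Psi$ and $S_\beta(R)_\Psi=S_\beta(AB)_\Psi$, converts the two reduced-state entropies into the precise combination appearing in the theorem. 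Everything else is the routine sandwich of monotonicity between the pure-state upper bound of Lemma~\ref{lemma4} and the fidelity lower bound of Lemma~\ref{lemma5}.
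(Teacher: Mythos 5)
Your proposal is correct and follows essentially the same route as the paper: the same sandwich of LOCC monotonicity between the pure-state upper bound of Lemma~\ref{lemma4} on the input cut $AA_0R:BB_0$ and the fidelity lower bound of Lemma~\ref{lemma5} on the output cut $A_1R:B_1B^\prime B$, with $R$ grouped on Alice's side. The only cosmetic difference is that you evaluate the entropy term in Lemma~\ref{lemma5} on the $A_1R$ marginal while the paper uses the $B_1B^\prime B$ marginal; these coincide because the comparison state $\Phi_L^{A_1B_1}\otimes\Psi^{B^\prime BR}$ is pure.
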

\begin{proof}
We have
\begin{align}
\ere{\alpha}(AA_0R:B_0B)_{\ket{\Psi}^{ABR} \otimes \ket{\Phi_K}^{A_0 B_0}}
& \leq S_{2-\alpha}(AA_0R)_{\ket{\Psi}^{ABR} \otimes
\ket{\Phi_K}^{A_0 B_0}} \\
& \leq \log K + S_{2-\alpha}(B)_{\ket{\Psi}}, 
\end{align}
where the first inequality follows from Lemma \ref{lemma4} and the second one from
the additivity of the \renyi entropies for the product states.
Let $\ket{\varphi}^{B^\prime B R A_1 B_1} =$
$\ket{\Psi}^{B^\prime B R} \otimes \ket{\Phi}^{A_1 B_1}$.
We now have
\begin{align}
\ere{\alpha}(A_1 R : B_1 B^\prime B)_\rho & \geq
\frac{2\alpha}{\alpha-1} \log F +
S_{\frac{1}{2-\alpha}}(B_1 B^\prime B)_{\ket{\varphi}} \\
& \geq \frac{2\alpha}{\alpha-1} \log F + \log L + S_{\frac{1}{2-\alpha}}
(AB)_{\ket{\Psi}^{ABR}},
\end{align}
where the first inequality follows from Lemma \ref{lemma5}.
Using the LOCC monotonicity of the RREE, we have
\beq
\ere{\alpha}(AA_0R:B_0B)_{\ket{\Psi}^{ABR}
\otimes \ket{\Phi_K}^{A_0 B_0}} \geq \ere{\alpha}(A_1 R : B_1 B^\prime B)_\rho,
\enq
and hence,
\beq
\log F \leq \frac{\alpha-1}{2\alpha} \left[ \log K - \log L + S_{2-\alpha}
(B)_{\ket{\Psi}} - S_{\frac{1}{2-\alpha}}(AB)_{\ket{\Psi}} \right].
\enq
For $n$ copies, we have
\beq
\log F \leq n \frac{\alpha-1}{2\alpha} \left[ \frac{\log K - \log L}{n} +
S_{2-\alpha}(B)_{\ket{\Psi}} -
S_{\frac{1}{2-\alpha}}(AB)_{\ket{\Psi}} \right].
\enq
Note that $S_{\frac{1}{2-\alpha}}(AB)_{\ket{\Psi}} -
S_{2-\alpha}(B)_{\ket{\Psi}}$ $\leq S(A|B)_{\ket{\Psi}}$ and one
can make the LHS approach the RHS by bringing $\alpha$ close to $1$
from above.
\end{proof}

\subsection{Strong converse with one-way LOCC}

As mentioned in Ref. \cite{state-merging-2005}, state-merging can be achieved with
just one-way LOCC. Note that since
one-way LOCC is a special case of LOCC, therefore, the converse for the entanglement
rate remains the same. What we need to provide is the converse for classical
communication cost.

\begin{theorem}[Strong converse for the classical communication cost]
For a $(\Psi,F)$ quantum state merging protocol, the following bound
holds for $\alpha \in (0.5,1)$ and $\beta = \alpha/(2 \alpha-1)$,
\beq
\log F \leq n \frac{1-\alpha}{4\alpha} \left[ \frac{\log |X|}{n} -
S_\beta(A)_\Psi - S_\beta(R)_\Psi + S_\alpha(AR)_\Psi \right].
\enq
\end{theorem}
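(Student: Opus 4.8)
The plan is to mirror the structure of the previous theorem (the strong converse for the entanglement rate), but now tracking the classical communication register $X$ rather than the entanglement registers, and exploiting that for \emph{one-way} LOCC the classical message can be modeled as a classical register $X$ of dimension $|X|$ that is generated by Alice's measurement and transmitted to Bob. First I would set up a purified description of the one-way protocol: Alice applies an instrument with classical outcome recorded in $X$, sends $X$ to Bob, and Bob applies a conditional operation. The key observation is that the reference $R$ stays with the quantum state throughout, and the quantity we now want to lower-bound is $\log|X|$, the number of classical bits sent. The natural entanglement-measure bookkeeping is to compare the RREE across a cut that isolates the classical register $X$, using the fact that $\log|X|$ upper-bounds a \renyi entropy of the classical register.

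The main chain of inequalities should look as follows. Because $\alpha \in (0.5,1)$ now lies \emph{below} $1$, the relevant bounds from Lemma~\ref{lemma4} flip: for a pure state $\ket{\Psi}^{AB}$ we have $S_{1/\alpha}(A)_\Psi \leq \ere{\alpha}(A:B)_\Psi \leq S_{2-\alpha}(A)_\Psi$, and the van~Dam--Hayden bound (Lemma~\ref{lemma11}) is stated precisely for this range with $\beta = \alpha/(2\alpha-1)$. I would first upper-bound the RREE of the initial state across the cut involving $X$ by a \renyi entropy, pulling out a $\log|X|$ term from the classical register (analogous to how $\log K$ was extracted in the entanglement-rate proof), using additivity of \renyi entropies on product states. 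Then I would lower-bound the RREE of the final (post-merging) state by applying Lemma~\ref{lemma11} together with the fidelity guarantee $F$ of the protocol, which produces the $\frac{2\alpha}{1-\alpha}\log F$ term and a \renyi entropy of the target state. Since the target is $\Phi_L^{A_1 B_1} \otimes \Psi^{B^\prime B R}$, the entropy of the purified system reduces to entropies of $\Psi$ on the appropriate subsystems, which is where the $S_\beta(A)_\Psi$, $S_\beta(R)_\Psi$, and $S_\alpha(AR)_\Psi$ terms should emerge.

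The hard part will be getting the combinatorics of the subsystem labels to produce exactly $-S_\beta(A)_\Psi - S_\beta(R)_\Psi + S_\alpha(AR)_\Psi$ on the right-hand side. This requires carefully choosing which bipartite cut to take the RREE across so that, after the dust settles, the classical-communication term $\log|X|$ appears alone on one side while the remaining entropic terms organize into that specific combination. I expect that the symmetry of the \renyi relative entropy of entanglement (Lemma~\ref{lemma1} applies to local maps on either side) will be essential here, allowing me to place the cut on the side that makes the classical register tractable. The appearance of three distinct \renyi entropies (two at order $\beta$ and one at order $\alpha$) rather than the two-term structure of the entanglement-rate bound suggests that Lemma~\ref{lemma11} is applied with $\sigma$ being a reduced state whose purification involves both $A$ and $R$, so that van~Dam--Hayden converts an order-$\alpha$ entropy of $AR$ into order-$\beta$ entropies of the marginals $A$ and $R$ after the fidelity step.

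Finally I would rescale to $n$ copies using the additivity $f_\alpha[(\rho)^{\otimes n}] = n f_\alpha(\rho)$, factor out $\frac{1-\alpha}{4\alpha}$ (note the factor of $4$ rather than $2$, which I anticipate arises because the classical communication is counted against \emph{twice} the logarithmic fidelity rate, or equivalently because two of the van~Dam--Hayden invocations each contribute a factor), and verify the asymptotic consistency by checking that as $\alpha \to 1^-$ the bracketed quantity tends to $\log|X|/n - S(A)_\Psi - S(R)_\Psi + S(AR)_\Psi = \log|X|/n - I(A:R)_\Psi$, recovering the known one-way classical communication cost $I(A:R)_\Psi$ for state merging. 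I would close by remarking that this limiting behavior confirms the bound has the correct Arimoto-type form~\eqref{gen-bound} with $f_\alpha = S_\beta(A)_\Psi + S_\beta(R)_\Psi - S_\alpha(AR)_\Psi$ and $g(n) = \log|X|$.
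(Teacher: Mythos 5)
There is a genuine gap, and it is conceptual rather than combinatorial: the quantity you propose to track, the RREE, is structurally incapable of producing a $\log|X|$ term. Any entanglement monotone is by construction blind to classical communication: for a state $\tau$ that is classical on $X$, moving $X$ across the cut is itself an LOCC operation in either direction (the owner measures the classical register without disturbing the state, announces the outcome, and the other party re-prepares it locally), so $\ere{\alpha}(XA:B)_\tau = \ere{\alpha}(A:XB)_\tau$, and no chain of applications of Lemma \ref{lemma1} or of LOCC monotonicity can ever charge the protocol for the bits sent. If you carry out your plan, the $X$-dependence simply cancels and you re-derive the entanglement-rate converse. Two further symptoms of the same problem: the initial state of the protocol does not contain $X$ at all ($X$ is created mid-protocol by Alice's instrument), so there is no analogue of extracting $\log K$ from $\Phi_K^{A_0 B_0}$; and the RREE lower bound via fidelity (Lemma \ref{lemma5}) is proven only for $\alpha \in (1,2]$ --- its proof does not survive the ``flip'' to $\alpha \in (0.5,1)$, which is exactly the range this theorem needs.

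The paper's proof abandons the RREE entirely for this theorem and works directly with R\'{e}nyi entropies of the registers. It purifies Alice's and Bob's maps by isometries $U_\cE$, $U_\cD$ with environments $E_1$, $E_2$, uses Uhlmann's theorem to lift the fidelity guarantee to the purification with some pure $\kappa^{E_1 E_2}$, traces down to $F(\rho^{A_1 E_1 R}, \Phi_L^{A_1} \otimes \Psi^{R} \otimes \kappa^{E_1}) \geq F$, and then invokes the appendix inequality \eqref{dummy9} to replace the uncontrolled state $\Phi_L^{A_1} \otimes \kappa^{E_1}$ by the true marginal $\rho^{A_1 E_1}$ at the price of squaring the fidelity, giving \eqref{dummy10}: $F(\rho^{R A_1 E_1}, \Psi^R \otimes \rho^{A_1 E_1}) \geq F^2$. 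This squaring, fed into a \emph{single} van Dam--Hayden application (Lemma \ref{lemma11}), is where the factor $4$ comes from --- not from two separate van Dam--Hayden invocations as you guessed. The $\log|X|$ term enters through Lemma \ref{lemma10} (adjoining a classical register of dimension $|X|$ shifts a R\'{e}nyi entropy by at most $\log|X|$), applied twice, after which isometric invariance rewrites $S_\beta(X A_1 E_1)_\Xi$ and $S_\alpha(R X A_1 E_1)_\Xi$ as entropies of $AA_0$ and $RAA_0$ on the input state, with the $\log K$ contributions canceling because all R\'{e}nyi entropies of $\Phi_K$ coincide; this is how the specific combination $S_\beta(A)_\Psi + S_\beta(R)_\Psi - S_\alpha(AR)_\Psi$ emerges. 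Your closing consistency check (the bracket tends to $\log|X|/n - I(A:R)_\Psi$ as $\alpha \to 1^-$) is correct, but the mechanism you propose cannot reach that bound.
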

\begin{proof}
Note that a one-way LOCC could be treated  as two cptp maps:
the first one by Alice $\cE: A A_0 \to X A_1$, where $X$ is a classical
register modeling the classical communication from Alice to Bob, and the
second one by Bob $\cD: XBB_0 \to B_1 B^\prime B$.

Let us assume that $\cE$ is constructed using an isometry $U_\cE: A A_0 \to
X A_1 E_1$, where $E_1$ is the environment and $\cD$ is constructed
using an isometry $U_\cD: X B B_0 \to B_1 B^\prime B E_2$,
where $E_2$ is the environment. Let (with some abuse of notation in
the order of the subsystems)
\begin{align}
\ket{\Xi}^{A_1 X B B_0 E_1} & = U_\cE \ket{\Psi}^{ABR}
\ket{\Phi_K}^{A_0 B_0} \\
\ket{\Omega}^{A_1 B_1 B^\prime B R E_1 E_2} & = U_\cD \circ U_\cE
\ket{\Psi}^{ABR}.
\end{align}
Since
\beq
F \left( \rho^{A_1 B_1 B^\prime B R}, \Phi_L^{A_1 B_1} \otimes
\Psi^{B^\prime B R} \right) \geq F,
\enq
and $\ket{\Omega}^{A_1 B_1 B^\prime B R E_1 E_2}$ is a purification of
$\rho^{A_1 B_1 B^\prime B R}$, using the Uhlmann's theorem
\cite{uhlmann-1976}, there exists a pure state
$\kappa^{E_1 E_2}$ such that
\beq
F ( \rho^{A_1 B_1 B^\prime B R}, \Phi_L^{A_1 B_1} \otimes
\Psi^{B^\prime B R}) =
F (\ket{\Omega}^{A_1 B_1 B^\prime B R E_1 E_2},
\Phi_L^{A_1 B_1} \otimes \Psi^{B^\prime B R} \otimes \kappa^{E_1 E_2} ).
\enq
Using the monotonicity of Fidelity under the partial trace, we have
$F(\rho^{A_1 E_1 R},  \Phi_L^{A_1}
\otimes \Psi^{R} \otimes \kappa^{E_1}) \geq F$. Now
invoking \eqref{dummy9}, we have
\beq
\label{dummy10}
F(\rho^{R A_1 E_1}, \Psi^R \otimes \rho^{A_1 E_1}) \geq F^2.
\enq
We now have
\begin{align}
\log |X| & \geq S_\beta(X A_1 E_1)_\Xi - S_\beta(A_1 E_1)_\Xi \\
& \geq S_\beta(X A_1 E_1)_\Xi - S_\beta(A_1 E_1)_\Xi - \nonumber \\
& \hspace{1in} \left[ S_\alpha(R X A_1 E_1)_\Xi -
S_\alpha(R A_1 E_1)_\Xi \right] \\
& = S_\beta(AA_0)_{\Psi \otimes \Phi_K} -
S_\alpha(RAA_0)_{\Psi \otimes \Phi_K} + \nonumber \\
& \hspace{1in} S_\alpha(RA_1 E_1)_\rho - S_\beta(A_1 E_1)_\rho \\
& = S_\beta(A)_\Psi - S_\alpha(RA)_\Psi + S_\alpha(RA_1 E_1)_\rho -
S_\beta(A_1 E_1)_\rho \\
& \geq S_\beta(A)_\Psi - S_\alpha(RA)_\Psi + S_\beta(R)_\Psi +
\frac{4 \alpha}{1-\alpha} \log F,
\end{align}
where the first two inequalities follows from Lemma \ref{lemma10}, the first 
equality follows since the \renyi entropies are invariant under isometries
(the isometry in question is $U_\cE$), the second equality follows by
canceling out $S_\beta(A_0)_{\Phi_K}$ and using the fact that
$S_\beta(A_0)_{\Phi_K} = S_\alpha(A_0)_{\Phi_K}$,
the third inequality follows from \eqref{dummy9} to have
\beq
S_\alpha(RA_1 E_1)_\rho \geq
S_\beta(RA_1E_1)_{\Psi^R \otimes \rho^{A_1 E_1}}
+ \frac{4\alpha}{1-\alpha} \log F,
\enq
where we have used the fact that $\rho^R = \Psi^R$.

Since $S_\beta(A)_\Psi + S_\beta(R)_\Psi - S_\alpha(RA)_\Psi$
$\leq I(A:R)_\Psi$ and LHS can be made to approach the RHS by choosing
$\alpha$ close to $1$ from below. Evaluating the above for $n$ copies
gives us the claim of the theorem.
\end{proof}

\section{Strong converse for entanglement concentration}

We now provide a short proof for a strong converse for the entanglement 
concentration protocol which is first defined below.

\begin{definition}[Entanglement concentration]
Let a pure state $\ket{\Psi}^{AB}$ be shared between Alice ($A$) and
Bob ($B$). A $(\Psi, F)$ {\bf entanglement concentration} protocol
is a LOCC quantum
operation $\cM : A \otimes B \to A_1 \otimes B_1$ such that
for $\rho^{A_1 B_1} = \cM \left( \Psi^{AB} \right)$,
\beq
F \left( \rho^{A_1 B_1}, \Phi_L^{A_1 B_1} \right) \geq F.
\enq
In case of many copies $\Psi = \psi^{\otimes n}$, $\log L/n$ is called
the rate of the protocol. A real number $R$ is called an
{\bf achievable rate} if there exist, for $n \to \infty$, entanglement concentration
protocols of rate approaching $R$ and fidelity approaching $1$.
\end{definition}

\begin{theorem}[Bennett \emph{et al} \cite{distill-1996}]
For a pure state $\Psi^{AB}$ shared by Alice and Bob,
the highest achievable rate is given by $S(A)_\Psi$.
Conversely, any concentration protocol achieving rates higher than
$S(A)_\rho$ would have fidelity bounded away from $1$.
\end{theorem}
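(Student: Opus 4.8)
The plan is to run the same RREE-based argument used for state merging, now taking the maximally entangled state as the pure reference. For a single copy I would first control the entanglement resource available in the pure input by bounding its RREE from above through the pure-state estimate \eqref{yae5} of Lemma \ref{lemma4}, giving $\ere{\alpha}(A:B)_\Psi \leq S_{2-\alpha}(A)_\Psi$ for $\alpha \in (1,2]$. This is the only place the input enters, and it ties the usable resource to a Rényi entropy of Alice's marginal.

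Next I would pass to the output $\rho^{A_1 B_1} = \cM(\Psi^{AB})$ using the LOCC monotonicity of the RREE (the Vedral \emph{et al} lemma), so that $\ere{\alpha}(A:B)_\Psi \geq \ere{\alpha}(A_1:B_1)_\rho$. Since $\rho^{A_1 B_1}$ is $F$-close to the target $\Phi_L^{A_1 B_1}$, I would lower-bound its RREE via Lemma \ref{lemma5} with the pure reference taken to be $\Phi_L^{A_1 B_1}$, obtaining $\ere{\alpha}(A_1:B_1)_\rho \geq \frac{2\alpha}{\alpha-1}\log F + S_{\frac{1}{2-\alpha}}(A_1)_{\Phi_L}$. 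The feature that makes the argument short is that the marginal of a rank-$L$ maximally entangled state is maximally mixed on $L$ dimensions, so its Rényi entropy is $\log L$ at every order; hence $S_{\frac{1}{2-\alpha}}(A_1)_{\Phi_L} = \log L$.

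Chaining the three displays gives $S_{2-\alpha}(A)_\Psi \geq \frac{2\alpha}{\alpha-1}\log F + \log L$, which rearranges to $\log F \leq \frac{\alpha-1}{2\alpha}\bigl[ S_{2-\alpha}(A)_\Psi - \log L \bigr]$. For the i.i.d.\ case $\Psi = \psi^{\otimes n}$, additivity of the Rényi entropy over product states promotes this to a bound of the form \eqref{gen-bound2} with $f_\alpha = S_{2-\alpha}(A)_\psi$ and $\zeta = 1/2$. Because $S_{2-\alpha}(A)_\psi \geq S(A)_\psi$ for $\alpha \in (1,2]$, with equality in the limit $\alpha \to 1^{+}$, any rate $\log L/n \geq R > S(A)_\psi$ lets one pick $\alpha$ near $1$ so that the bracket is strictly negative, forcing exponential decay of the fidelity in $n$. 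I do not expect a genuine obstacle: the work has already been done in Lemmas \ref{lemma4} and \ref{lemma5}, and the only subtlety is keeping the two opposite-direction estimates straight—the upper bound applied to the pure input resource and the lower bound applied to the approximate output yield, with LOCC monotonicity bridging them and the flat Rényi spectrum of $\Phi_L$ collapsing the output term to $\log L$.
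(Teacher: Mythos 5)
Your proposal is correct and takes essentially the same route as the paper: the identical chain of inequalities $S_{2-\alpha}(A)_\Psi \geq \ere{\alpha}(A:B)_\Psi \geq \ere{\alpha}(A_1:B_1)_{\rho} \geq \frac{2\alpha}{\alpha-1}\log F + \log L$, obtained from Lemma \ref{lemma4}, LOCC monotonicity, and Lemma \ref{lemma5} applied with the reference $\Phi_L$ (whose flat Schmidt spectrum gives $S_{\frac{1}{2-\alpha}}(A_1)_{\Phi_L}=\log L$), followed by additivity over $n$ copies and the limit $\alpha \to 1^{+}$. Note that this establishes only the converse clause of the quoted statement; the achievability of the rate $S(A)_\Psi$ is not addressed, but the paper likewise does not prove it and takes it from Bennett \emph{et al}, its own contribution being exactly the strong-converse bound you derived.
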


The strong converse for this protocol follows from the results in
Ref. \cite{datta-leditzky-2014}.
We now prove a strong converse.

\begin{theorem}[Strong converse for entanglement concentration]
For any $(\Psi,F)$ entanglement concentration protocol, the following bound
holds for $\alpha \in (1,2]$
\beq
\log F \leq n \frac{\alpha-1}{2\alpha} \left[ S_{2-\alpha}(A)_\Psi -
\frac{\log L}{n} \right].
\enq
\end{theorem}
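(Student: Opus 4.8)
The plan is to sandwich the \renyi relative entropy of entanglement of the input pure state between an upper bound supplied by Lemma~\ref{lemma4} and a lower bound forced by the closeness of the protocol output to the target maximally entangled state. Since $\cM$ is LOCC, the RREE cannot increase along the protocol, so any lower bound on the output RREE is automatically a lower bound on the input RREE, which is in turn controlled from above by an $\alpha$-entropy of Alice's marginal. Rearranging the resulting chain isolates $\log F$, and additivity over product states delivers the factor of $n$.

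Concretely, I would first apply the upper bound in \eqref{yae5} of Lemma~\ref{lemma4} to the shared input pure state to obtain $\ere{\alpha}(A:B)_\Psi \leq S_{2-\alpha}(A)_\Psi$ for $\alpha \in (1,2]$. Next I would invoke the LOCC monotonicity of the RREE to pass the bound through the protocol map, giving $\ere{\alpha}(A:B)_\Psi \geq \ere{\alpha}(A_1:B_1)_\rho$ with $\rho^{A_1 B_1} = \cM(\Psi^{AB})$. For the lower bound on the output side, I would apply Lemma~\ref{lemma5} with the \emph{target} state $\Phi_L^{A_1 B_1}$ in the role of the pure reference state and $\rho^{A_1 B_1}$ in the role of the general bipartite state; the hypothesis $F(\Phi_L^{A_1 B_1},\rho^{A_1 B_1}) \geq F$ is exactly the protocol's fidelity guarantee. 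This yields
\beq
\ere{\alpha}(A_1:B_1)_\rho \geq \frac{2\alpha}{\alpha-1}\log F + S_{\frac{1}{2-\alpha}}(A_1)_{\Phi_L}.
\enq

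The observation that closes the argument is that $\Phi_L^{A_1 B_1}$ is maximally entangled of Schmidt rank $L$, so its marginal $\Phi_L^{A_1}$ is maximally mixed on an $L$-dimensional space and every \renyi entropy of it equals $\log L$; in particular $S_{\frac{1}{2-\alpha}}(A_1)_{\Phi_L} = \log L$. Chaining the three inequalities gives $S_{2-\alpha}(A)_\Psi \geq \frac{2\alpha}{\alpha-1}\log F + \log L$, and since $\alpha \in (1,2]$ makes $\frac{\alpha-1}{2\alpha}$ positive, rearranging produces $\log F \leq \frac{\alpha-1}{2\alpha}\left[ S_{2-\alpha}(A)_\Psi - \log L \right]$. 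Finally, specializing to $\Psi = \psi^{\otimes n}$ and using the additivity of the \renyi entropy on product states, $S_{2-\alpha}(A)_{\psi^{\otimes n}} = n\,S_{2-\alpha}(A)_\psi$, and pulling out the factor $n$ gives the stated bound.

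I expect essentially no serious obstacle once Lemmas~\ref{lemma4} and~\ref{lemma5} are available, since the heavy lifting (the van Dam--Hayden inequality of Lemma~\ref{lemma11} and the Sibson-identity argument behind Lemma~\ref{lemma4}) has already been carried out. The one point requiring care is the direction in which Lemma~\ref{lemma5} is invoked: one must feed in the target maximally entangled state as the pure reference, so that the additive $S_{\frac{1}{2-\alpha}}$ term collapses cleanly to the entanglement yield $\log L$, rather than applying it with $\Psi$ as the reference, which would not isolate $L$.
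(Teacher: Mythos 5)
Your proposal is correct and matches the paper's own proof essentially line for line: the paper chains $n S_{2-\alpha}(A)_\Psi \geq \ere{\alpha}(A:B)_\Psi \geq \ere{\alpha}(A_1:B_1)_\rho \geq \frac{2\alpha}{\alpha-1}\log F + \log L$ using exactly the upper bound of Lemma~\ref{lemma4}, monotonicity of the RREE under the protocol, and Lemma~\ref{lemma5} applied with $\Phi_L^{A_1 B_1}$ as the pure reference state (whose marginal is maximally mixed, collapsing the $S_{\frac{1}{2-\alpha}}$ term to $\log L$). Your closing remark about the direction in which Lemma~\ref{lemma5} must be invoked is precisely the right point of care, and your use of additivity to extract the factor $n$ is also how the paper concludes.
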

\begin{proof}
We have
\beq
n S_{2-\alpha}(A)_\Psi \geq \ere{\alpha}(A:B)_\Psi \geq
\ere{\alpha}(A_1:B_1)_{\rho}
\geq \frac{2\alpha}{\alpha-1} \log F + \log L,
\enq
where the first inequality follows from Lemma \ref{lemma4}, the second 
inequality follows
from Lemma \ref{lemma1} and the third inequality follows from Lemma 
\ref{lemma5}.
\end{proof}

\section{Strong converse for Schumacher compression}

We now provide a short proof for a strong converse for the Schumacher 
compression protocol which is first defined below.

\begin{definition}[Schumacher compression]
Let Alice have a quantum state $\rho^{A}$ with purification
$\Psi^{RA}$ that she wants to transfer to Bob.
A $(\rho, F)$ {\bf Schumacher compression} protocol consists of a cptp 
compression operation $\cC : A \to B$ (with isometry $U_{\cC} : A \to B E_1$)
and a cptp decompression operation $\cD : B \to A_1$
(with isometry $U_{\cD} : B \to A_1 E_2$) with
$\ket{\Omega}^{R A_1 E_1 E_2} = U_{\cD} \circ U_{\cC} \ket{\Psi}^{RA}$
such that
\beq
F \left( \Omega^{R A_1}, \Psi^{R A_1} \right) \geq F.
\enq
In case of many copies $\rho^{\otimes n}$, $\log |B|/n$ is called
the rate of the protocol. A real number $R$ is called an
{\bf achievable rate} if there exist, for $n \to \infty$, Schumacher
compression protocols of rate approaching $R$ and fidelity
approaching $1$.
\end{definition}

\begin{theorem}[Schumacher \cite{schu-noiseless-1995}]
The smallest achievable rate for Schumacher compression is given
by $S(A)_\rho$.
Conversely, any Schumacher compression protocol achieving rates
smaller than $S(A)_\rho$ would have fidelity bounded away from $1$.
\end{theorem}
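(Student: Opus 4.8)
The statement splits into achievability (every $R>S(A)_\rho$ is an achievable rate, so the smallest achievable rate is at most $S(A)_\rho$) and a converse (no rate below $S(A)_\rho$ is achievable, so the smallest achievable rate is at least $S(A)_\rho$), and I would treat the two directions independently. For achievability I would use the typical-subspace construction. Diagonalizing $\rho=\sum_i p_i\ketbra{i}$ gives $\rho^{\otimes n}=\sum_{i^n}p_{i^n}\ketbra{i^n}$ with $p_{i^n}=\prod_k p_{i_k}$; for a fixed $\delta>0$ let $\Pi_\delta^n$ project onto the span of the typical strings, those with $|-\tfrac1n\log p_{i^n}-S(A)_\rho|\le\delta$. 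The law of large numbers applied to the eigenvalue distribution (the classical AEP) yields $\tr(\Pi_\delta^n\rho^{\otimes n})\ge 1-\epsilon_n$ with $\epsilon_n\to 0$, while $\dim\Pi_\delta^n\le 2^{n(S(A)_\rho+\delta)}$. I would then let $\cC$ measure $\{\Pi_\delta^n,\eye-\Pi_\delta^n\}$ on $A^n$ and, on the typical outcome, isometrically embed $\mathrm{range}(\Pi_\delta^n)$ into a register $B$ of $\lceil n(S(A)_\rho+\delta)\rceil$ qubits (outputting a fixed state on the atypical outcome), with $\cD$ the inverse embedding, so that $\cD\circ\cC$ acts as the identity on $\mathrm{range}(\Pi_\delta^n)$.

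The key step of achievability is to convert the scalar bound $\tr(\Pi_\delta^n\rho^{\otimes n})\ge 1-\epsilon_n$ into a statement about the entanglement fidelity. Writing the purification as $\ket{\Psi}^{R^nA^n}=\ket{\Psi}^{\otimes n}$ and using $\tr(\Pi_\delta^n\,\tr_{R^n}\Psi^{\otimes n})=\tr(\Pi_\delta^n\rho^{\otimes n})$, the successful branch of $\cD\circ\cC$ returns, up to normalization, $(\Pi_\delta^n\otimes\eye^{R^n})\ket{\Psi}^{\otimes n}$; hence $\langle\Psi^{\otimes n}|\,\Omega^{R^nA_1}\,|\Psi^{\otimes n}\rangle\ge\big(\tr\Pi_\delta^n\rho^{\otimes n}\big)^2\ge(1-\epsilon_n)^2$, so that $F(\Omega^{R^nA_1},\Psi^{R^nA_1})\ge 1-\epsilon_n\to 1$. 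Since the rate is $\lceil n(S(A)_\rho+\delta)\rceil/n\to S(A)_\rho+\delta$, letting $\delta\to 0$ shows every $R>S(A)_\rho$ is achievable, i.e. the smallest achievable rate is at most $S(A)_\rho$. This estimate on the successful branch is exactly the content of the gentle-measurement lemma, with the reference system $R^n$ kept inert throughout.

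For the converse I would combine data processing with a continuity estimate. Because the reference $R^n$ is never touched and $A_1$ is produced from the transmitted register $B$ through $\cD$, the chain $R^n\to B\to A_1$ gives $I(R^n:A_1)_\Omega\le I(R^n:B)_{(\mathrm{id}\otimes\cC)(\Psi^{\otimes n})}\le 2\log|B|=2nR$, where I use $I(R^n:B)\le 2\min\{S(R^n),S(B)\}\le 2\log|B|$. The target $\Psi^{R^nA_1}$ is pure with $I(R^n:A_1)_\Psi=2S(A^n)_\Psi=2nS(A)_\rho$. If $F(\Omega^{R^nA_1},\Psi^{R^nA_1})\ge 1-\epsilon$ then the trace distance between the two states is $O(\sqrt\epsilon)$, and since $S(R^n)$ is preserved exactly, the Fannes--Audenaert inequality applied to the two remaining entropy terms gives $I(R^n:A_1)_\Omega\ge 2nS(A)_\rho-Cn\sqrt\epsilon$ for a constant $C$ depending only on $d_A$. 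Comparing the two bounds yields $S(A)_\rho-R\le \tfrac{C}{2}\sqrt\epsilon$, so for any fixed $R<S(A)_\rho$ the fidelity obeys $F\le 1-\big(\tfrac{2(S(A)_\rho-R)}{C}\big)^2$, bounded away from $1$ uniformly in $n$; hence no rate below $S(A)_\rho$ is achievable and the smallest achievable rate is at least $S(A)_\rho$, matching achievability.

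The main obstacle in each half is the fidelity bookkeeping rather than any single inequality: in achievability it is the passage from the typical weight to the \emph{entanglement} fidelity of $\cD\circ\cC$ on the full purification (distinguishing ensemble from entanglement fidelity while holding $R^n$ inert), and in the converse it is ensuring the continuity error is $o(n)$ as $\epsilon\to 0$ while the rate gap $n(S(A)_\rho-R)$ is $\Theta(n)$, so that high fidelity and a rate below $S(A)_\rho$ cannot coexist. I would finally remark that the converse direction is strengthened by the \renyi strong converse developed in this paper: specializing that exponential bound and letting $\alpha\to 1^{+}$ upgrades ``fidelity bounded away from $1$'' to ``fidelity decaying to $0$'', although the achievability half still requires the independent typical-subspace construction above.
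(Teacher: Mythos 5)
Your proposal is correct, but there is nothing in the paper to compare it against line by line: the paper states this theorem as a known background result, citing Schumacher for achievability and noting that a strong converse was given by Winter \cite{andreas-thesis} and follows from \cite{datta-leditzky-2014}; it offers no proof of its own. What the paper does prove is the \emph{subsequent} theorem, a \renyi strong converse, whose route differs instructively from your converse half. You bound $I(R^n\!:\!A_1)_\Omega \le 2\log|B|$ by data processing and compare against $I(R^n\!:\!A_1)_\Psi = 2nS(A)_\rho$ via Fannes--Audenaert; this yields only ``fidelity bounded away from $1$'' and needs the continuity error to be $o(n)$ against a $\Theta(n)$ rate gap, as you note. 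The paper instead runs $\log|B| \ge S_\alpha(B)_\Omega = S_\alpha(RE_1)_\Omega \ge S_\alpha(RE_1)_\Omega - S_\beta(E_1)_\tau$ and invokes the van Dam--Hayden inequality (Lemma \ref{lemma11}) together with Uhlmann's theorem, which avoids continuity estimates entirely and upgrades the conclusion to an exponential decay $\log F \le n\frac{1-\alpha}{2\alpha}\bigl[\frac{\log|B|}{n} - S_\beta(A)_\rho\bigr]$; your closing remark correctly identifies that letting $\alpha \to 1^{-}$ (the relevant limit is from below here, since $\alpha \in (0.5,1)$, not $\alpha \to 1^{+}$) recovers and strengthens your weak converse. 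Your achievability half --- typical subspaces, the entanglement-fidelity bound $\braket{\Psi^{\otimes n}|\Omega|\Psi^{\otimes n}} \ge (\tr \Pi_\delta^n \rho^{\otimes n})^2$ from the projective Kraus branch, consistent with the paper's convention $F(\rho,\sigma) = \lVert\sqrt{\rho}\sqrt{\sigma}\rVert_1$ so that $F \ge 1-\epsilon_n$ --- is the standard argument and is genuinely needed, since nothing in the paper addresses achievability. Two trivial bookkeeping points: the Fannes--Audenaert step carries an additive $h(T)$ term, so your inequality should read $S(A)_\rho - R \le \frac{C}{2}\sqrt{\epsilon} + O(1/n)$, which changes nothing asymptotically; and your constant $C$ should absorb the dimension of $R^n$ as well, which is harmless since one may take $d_R = d_A$.
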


A strong converse for this protocol
was first provided by Winter in Ref. \cite{andreas-thesis} and
also follows from Ref. \cite{datta-leditzky-2014}.
We now prove another strong converse.

\begin{theorem}[Strong converse for Schumacher compression]
For any $(\rho,F)$ Schumacher compression protocol, the following bound
holds for $\alpha \in (0.5,1)$ and $\beta = \alpha/(2 \alpha - 1)$
\beq
\log F \leq n \frac{1-\alpha}{2\alpha} \left[ \frac{\log|B|}{n} -
S_{\beta}(A)_\rho \right].
\enq
\end{theorem}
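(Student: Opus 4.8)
The plan is to lower bound the compressed register size $\log|B|$ by $S_\beta(A)_\rho$ plus a positive multiple of $\log F$, and then rearrange; this mirrors the strong converse for the classical communication cost, but is in fact cleaner because there is no classical register to carry along. I would first fix the Stinespring dilations explicitly: write $\ket{\Xi}^{RBE_1} = U_\cC \ket{\Psi}^{RA}$ for the post-compression pure state, so that $\ket{\Omega}^{RA_1E_1E_2} = U_\cD \ket{\Xi}^{RBE_1}$ with $U_\cD : B \to A_1E_2$ acting only on $B$. The single-copy target, after clearing the factor $\tfrac{1-\alpha}{2\alpha}>0$, is the inequality $\log|B| \geq S_\beta(A)_\rho + \tfrac{2\alpha}{1-\alpha}\log F$.

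First I would record the dimension bound $\log|B| \geq S_\alpha(B)_\Xi$, valid since the Rényi entropy never exceeds the logarithm of the dimension, and then use purity of $\ket{\Xi}^{RBE_1}$ to rewrite it as $\log|B| \geq S_\alpha(RE_1)_\Xi$. It then remains to lower bound $S_\alpha(RE_1)_\Xi$ by $S_\beta(R)_\Psi$ up to the fidelity term, which is exactly the shape of Lemma \ref{lemma11}.

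The key step is to manufacture a fidelity statement on the subsystem $RE_1$. Since $\ket{\Omega}^{RA_1E_1E_2}$ purifies $\Omega^{RA_1}$ and $\ket{\Psi}^{RA_1}$ is itself pure, Uhlmann's theorem \cite{uhlmann-1976} supplies a pure state $\ket{\mu}^{E_1E_2}$ with $F(\ket{\Omega}^{RA_1E_1E_2}, \ket{\Psi}^{RA_1}\otimes\ket{\mu}^{E_1E_2}) = F(\Omega^{RA_1},\Psi^{RA_1}) \geq F$. Because $U_\cD$ touches neither $R$ nor $E_1$, we have $\Omega^{RE_1} = \Xi^{RE_1}$, and monotonicity of fidelity under the partial trace over $A_1E_2$ yields $F(\Xi^{RE_1}, \Psi^R \otimes \mu^{E_1}) \geq F$, where $\mu^{E_1} = \tr_{E_2}\ketbra{\mu}^{E_1E_2}$. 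Applying Lemma \ref{lemma11} (van Dam and Hayden) with $\rho = \Xi^{RE_1}$ of order $\alpha$ and $\sigma = \Psi^R\otimes\mu^{E_1}$ of order $\beta$ then gives $S_\alpha(RE_1)_\Xi \geq S_\beta(\Psi^R\otimes\mu^{E_1}) + \tfrac{2\alpha}{1-\alpha}\log F$. Additivity of the Rényi entropy on the product together with $S_\beta(\mu^{E_1}) \geq 0$ gives $S_\beta(\Psi^R\otimes\mu^{E_1}) \geq S_\beta(R)_\Psi = S_\beta(A)_\Psi = S_\beta(A)_\rho$, using isospectrality of the marginals of the pure state $\ket{\Psi}^{RA}$ and $\rho^A = \Psi^A$. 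Chaining the displayed inequalities delivers the single-copy bound, and the additivity $S_\beta(A^n)_{\rho^{\otimes n}} = n\,S_\beta(A)_\rho$ upgrades it to $n$ copies, with $\log|B|$ the total compressed dimension.

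I expect the main obstacle to be the bookkeeping in the Uhlmann step: one must verify that the decompression isometry leaves $R$ and $E_1$ invariant so that $\Omega^{RE_1} = \Xi^{RE_1}$, and one must be careful to keep the fidelity target in the factorized form $\Psi^R \otimes \mu^{E_1}$ so that additivity and non-negativity of $S_\beta$ dispose of the unknown $\mu^{E_1}$ term cleanly. A secondary point to check is that the two Rényi orders entering Lemma \ref{lemma11} are assigned in the correct direction, namely order $\alpha$ to the higher-entropy state $\Xi^{RE_1}$ and order $\beta = \alpha/(2\alpha-1)$ to the reference product state.
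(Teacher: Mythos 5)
Your proposal is correct and follows essentially the same route as the paper's own proof: Uhlmann's theorem to obtain a product purification $\Psi^{RA_1}\otimes\mu^{E_1E_2}$, fidelity monotonicity under partial trace to get $F(\Xi^{RE_1},\Psi^R\otimes\mu^{E_1})\geq F$, the dimension bound $\log|B|\geq S_\alpha(B)$ combined with purity/isometry invariance to pass to $S_\alpha(RE_1)$, Lemma \ref{lemma11} with the orders assigned exactly as you state, and isospectrality of the pure-state marginals. Your write-up is if anything slightly more explicit than the paper's (naming the intermediate state $\ket{\Xi}^{RBE_1}$ and noting $\Omega^{RE_1}=\Xi^{RE_1}$), but the content is identical, including the step of discarding the non-negative $S_\beta(\mu^{E_1})$ term.
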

\begin{proof}
Using Uhlmann's theorem \cite{uhlmann-1976}, there exists a pure state
$\tau^{E_1 E_2}$ such that \linebreak
$F(\Omega^{R A_1 E_1 E_2}, \Psi^{R A_1}
\otimes \tau^{E_1 E_2}) \geq F$. Using the monotonicity of the Fidelity
under partial trace, we get
$F(\Omega^{R E_1}, \Psi^R \otimes \tau^{E_1}) \geq F$. We have
\begin{align}
\log |B| \geq S_\alpha(B)_\Omega = S_\alpha(R E_1)_\Omega
\geq S_\alpha(RE_1)_\Omega - S_\beta(E_1)_\tau \\
\geq S_\beta(R)_\Psi + \frac{2 \alpha}{1 - \alpha} \log F \\
= S_\beta(A)_\rho + \frac{2 \alpha}{1 - \alpha} \log F,
\end{align}
where in the first inequality comes from the fact that a maximally mixed
state gives the maximum \renyi entropy, the first equality is because
the \renyi entropy does not change under the application of an isometry,
the second inequality follows since we are subtracting a non-negative
term, the last inequality follows from Lemma \ref{lemma11} and
the last equality follows since $\Psi^{RA}$ is pure.
Evaluating for $n$ copies $\rho^{\otimes n}$ gives us the
claim of the theorem.
\end{proof}

\section{Conclusions and Acknowledgements}

To conclude, inspired by the Polyanskiy-Verd\'{u} paradigm, we provide
strong converses using generalized divergences for protocols
involving LOCC maps. While this is illustrated for two protocols
using LOCC, nevertheless, the ideas are presented in generality
without undue restrictions or
made specific to a particular protocol. We provide some inequalities involving
the \renyi relative entropy based quantities, in particular, the \renyi relative
entropy of entanglement.

We then provide a strong converse for the
quantum state merging protocol both for the entanglement rate as well
as for the classical communication cost (for one-way LOCC).
We also provide a short proof of a strong converse for entanglement 
concentration of pure states leveraging
the earlier developments and for Schumacher compression.
It may be possible to improve upon the exponents that we provided.

The author thanks Andreas Winter and Mark Wilde for their comments
and to Andreas Winter for mentioning a strong converse for
the entanglement rate for
state merging using the Schumacher's achievability and strong converse,
and for pointing an error in an earlier version in the interpretation
of the bounds obtained for state merging.

\appendix

\section{Some more inequalities}

The following lemmas are used in the text. The only exception is
Lemma \ref{lemma-not-needed-yet} which was derived for an approach
that got nowhere. We state it here nevertheless.

\begin{lemma}
\label{lemma-not-needed-yet}
For a cq state $\rho^{RX} = \sum_x p_x \rho_x \otimes \ketbra{x}^X$,
$\alpha > 1$, where $\rho_x$ are density matrices
and $\{ p_x \}$ a probability vector, we have
\begin{align}
\log |X| & \geq S_\alpha(R)_\rho - S_\alpha(R|X)_\rho,
\end{align}
where $S_\alpha(R|X)_\rho \equiv - I_\alpha(X \rangle R)$.
\end{lemma}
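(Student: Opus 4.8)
The plan is to reduce the claimed entropic inequality to an elementary scalar inequality by evaluating both \renyi quantities explicitly on the cq structure, and then to establish that scalar inequality by combining H\"older's inequality with a trace inequality that crucially uses the positivity of the $\rho_x$. First I would compute the two terms. Writing $\bar{\rho} \equiv \rho^R = \sum_x p_x \rho_x$, we have $S_\alpha(R)_\rho = \frac{1}{1-\alpha}\log \tr \bar{\rho}^\alpha$. For the conditional term, since $(\rho^{RX})^\alpha = \sum_x p_x^\alpha \rho_x^\alpha \otimes \ketbra{x}^X$ is block diagonal, $\tr_R (\rho^{RX})^\alpha = \sum_x p_x^\alpha (\tr\rho_x^\alpha)\ketbra{x}^X$, so that $\tr[\tr_R(\rho^{RX})^\alpha]^{1/\alpha} = \sum_x p_x (\tr\rho_x^\alpha)^{1/\alpha}$ and hence $S_\alpha(R|X)_\rho = -I_\alpha(X \rangle R) = \frac{\alpha}{1-\alpha}\log\sum_x p_x(\tr\rho_x^\alpha)^{1/\alpha}$. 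Substituting and using $\alpha>1$ (so that $1-\alpha<0$ flips the inequality), the claim $\log|X|\ge S_\alpha(R)_\rho - S_\alpha(R|X)_\rho$ becomes equivalent to the single scalar inequality
\[ |X|^{\alpha-1}\,\tr\bar{\rho}^\alpha \;\ge\; \Big(\sum_x p_x (\tr\rho_x^\alpha)^{1/\alpha}\Big)^\alpha. \]

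Next, abbreviating $b_x = (\tr\rho_x^\alpha)^{1/\alpha}$ and letting the index $x$ run over the (at most $|X|$) terms, I would apply H\"older's inequality with exponents $\alpha$ and $\alpha/(\alpha-1)$ to $\sum_x 1\cdot(p_x b_x)$, obtaining $\sum_x p_x b_x \le |X|^{(\alpha-1)/\alpha}\big(\sum_x p_x^\alpha b_x^\alpha\big)^{1/\alpha} = |X|^{(\alpha-1)/\alpha}\big(\sum_x p_x^\alpha\tr\rho_x^\alpha\big)^{1/\alpha}$. Raising to the $\alpha$-th power, it then suffices to prove the trace inequality $\sum_x p_x^\alpha\tr\rho_x^\alpha \le \tr\bar{\rho}^\alpha$, after which chaining the two displays recovers the boxed scalar inequality and hence the lemma.

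The hard part will be this last trace inequality, which is the only non-routine point: it asks for a \emph{lower} bound on $\tr\bar{\rho}^\alpha = \|\bar{\rho}\|_\alpha^\alpha$ in terms of the individual $\rho_x$, whereas the triangle inequality for the Schatten norm only supplies the useless \emph{upper} bound $\|\bar{\rho}\|_\alpha\le\sum_x p_x\|\rho_x\|_\alpha$. The way around this is to exploit positivity rather than the norm structure. For $\alpha\in(1,2]$ the function $t\mapsto t^{\alpha-1}$ is operator monotone, and since $\bar{\rho}\ge p_y\rho_y\ge 0$ for every $y$ we get $\bar{\rho}^{\alpha-1}\ge p_y^{\alpha-1}\rho_y^{\alpha-1}$; pairing this operator inequality against the positive operator $\rho_y$ (so that the trace of the difference times $\rho_y$ is nonnegative) and then summing gives
\[ \tr\bar{\rho}^\alpha = \sum_y p_y\,\tr\big(\bar{\rho}^{\alpha-1}\rho_y\big) \ge \sum_y p_y^\alpha\,\tr\rho_y^\alpha, \]
which is exactly what is needed. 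For the full range $\alpha>1$ one can instead invoke the majorization $\bigsqcup_x \mathrm{spec}(p_x\rho_x)\prec\mathrm{spec}(\bar{\rho})$ (the merged spectra of positive operators are majorized by the spectrum of their sum) together with Schur convexity of $t\mapsto t^\alpha$; I would present the operator-monotonicity argument as the primary route since it is elementary and matches the paper's working range, noting that everything else in the proof is bookkeeping.
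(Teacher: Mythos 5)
Your proof is correct and takes essentially the same route as the paper: the identical reduction to the scalar inequality $\bigl(\sum_x p_x(\tr\rho_x^\alpha)^{1/\alpha}\bigr)^\alpha \le |X|^{\alpha-1}\,\tr\bar\rho^{\,\alpha}$, with your H\"older step being equivalent to the paper's use of concavity of $x\mapsto x^{1/\alpha}$, followed by the same key trace inequality $\sum_x p_x^\alpha \tr\rho_x^\alpha \le \tr\bar\rho^{\,\alpha}$. The only difference is that the paper dispatches this superadditivity of $\tr(\cdot)^\alpha$ by citing Simon's book as ``easy to prove,'' whereas you supply a self-contained proof --- via operator monotonicity of $t\mapsto t^{\alpha-1}$ on $(1,2]$ and via the majorization/Schur-convexity argument for all $\alpha>1$, the latter being the one actually needed since the lemma is stated for the full range $\alpha>1$.
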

\begin{proof}
We have to show that
\beq
\sum_x p_x \left( \tr \rho_x^\alpha \right)^{1/\alpha} \leq
|X|^{(\alpha-1)/\alpha} \, \left[ \tr \left( \sum_x p_x \rho_x \right)^\alpha \right]^{1/\alpha}.
\enq
We now have
\begin{align}
\text{RHS} & = \left[ |X|^{\alpha-1} \tr \left( \sum_x p_x \rho_x \right)^\alpha \right]^{1/\alpha}  \\
& \geq \left( |X|^{\alpha-1} \sum_x p_x^\alpha \tr \rho_x^\alpha \right)^{1/\alpha}  \\
& = \left( \sum_x \frac{1}{|X|} |X|^{\alpha} p_x^\alpha \tr \rho_x^\alpha \right)^{1/\alpha}  \\
& \geq \sum_x \frac{1}{|X|} |X| p_x \left( \tr \rho_x^\alpha \right)^{1/\alpha} \\
& = \text{LHS},
\end{align}
where the first inequality follows from the easy to prove inequality
(see Ref. \cite{simon-book}
for example) that for positive operators $\rho$ and $\sigma$ and
$\alpha \geq 1$,
$\tr (\rho + \sigma)^\alpha \geq \tr \rho^\alpha + \tr \sigma^\alpha$ and the 
second inequality follows from the concavity of $x \mapsto x^{1/\alpha}$.
\end{proof}

\begin{lemma}
\label{lemma10}
For a cq state $\rho^{RX}$ (classical in $X$),
$\alpha \in [0,2] \backslash \{1\}$, we have
\begin{align}
\label{dummy6}
\log |X| & \geq S_\alpha(RX)_\rho - S_\alpha(R)_\rho \geq 0.
\end{align}
\end{lemma}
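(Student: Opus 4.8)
The plan is to reduce the whole statement to two scalar inequalities and then invoke standard convexity properties of the trace power function $A \mapsto \tr A^\alpha$. Write the cq state explicitly as $\rho^{RX} = \sum_x p_x \rho_x \otimes \ketbra{x}^X$, with $\{p_x\}$ a probability vector and $\rho_x$ density matrices. Since the $\ket{x}^X$ are orthogonal, $(\rho^{RX})^\alpha = \sum_x p_x^\alpha \rho_x^\alpha \otimes \ketbra{x}^X$, so, setting $S := \sum_x p_x^\alpha \tr \rho_x^\alpha$ and $T := \tr\left(\sum_x p_x \rho_x\right)^\alpha$, one has $S_\alpha(RX)_\rho = \frac{1}{1-\alpha}\log S$ and $S_\alpha(R)_\rho = \frac{1}{1-\alpha}\log T$, whence $S_\alpha(RX)_\rho - S_\alpha(R)_\rho = \frac{1}{1-\alpha}\log(S/T)$.

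Next I would rewrite the claimed double inequality $0 \leq S_\alpha(RX)_\rho - S_\alpha(R)_\rho \leq \log|X|$ in terms of $S$, $T$ and $|X|$. After multiplying by $1-\alpha$ and exponentiating, tracking the sign of $1-\alpha$, the claim becomes the chain $S \leq T \leq |X|^{\alpha-1} S$ when $\alpha \in (1,2]$, and $|X|^{\alpha-1} S \leq T \leq S$ when $\alpha \in [0,1)$. It therefore suffices to establish these two chains.

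For the inner bound ($S$ versus $T$) I would use the (sub/super)additivity of the trace power. For $\alpha \geq 1$ the inequality $\tr(\sigma_1 + \sigma_2)^\alpha \geq \tr \sigma_1^\alpha + \tr \sigma_2^\alpha$ (the same one invoked in Lemma \ref{lemma-not-needed-yet}), applied with $\sigma_x = p_x \rho_x$, gives $T \geq S$; for $\alpha \in [0,1)$ the reverse subadditivity gives $T \leq S$. For the outer bound ($T$ versus $|X|^{\alpha-1}S$) I would write the marginal as a uniform convex combination, $\rho^R = \sum_x p_x \rho_x = \frac{1}{|X|}\sum_x (|X| p_x \rho_x)$, and use that $A \mapsto \tr A^\alpha$ is convex for $\alpha \geq 1$ and concave for $\alpha \in [0,1]$. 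Convexity yields $T \leq \frac{1}{|X|}\sum_x \tr(|X| p_x \rho_x)^\alpha = |X|^{\alpha-1} S$ for $\alpha > 1$, while concavity yields the reverse $T \geq |X|^{\alpha-1} S$ for $\alpha \in [0,1)$. Combining the inner and outer bounds in each regime reproduces exactly the two required chains.

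The step demanding the most care is keeping the inequality directions coherent: the prefactor $\frac{1}{1-\alpha}$ changes sign at $\alpha = 1$ and, simultaneously, the trace power switches from concave to convex, so both reversals must be tracked together lest the final bounds come out backwards. The only other delicate point is the endpoint $\alpha = 0$, where $\tr A^0 = \mathrm{rank}(A)$; there the bounds can be checked directly or recovered by continuity from $\alpha \in (0,1)$, but this case is immaterial for the applications in the main text, which invoke the lemma only for $\alpha \in (0.5,1)$ and $\alpha \in (1,2]$.
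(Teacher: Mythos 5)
Your proof is correct and follows essentially the same route as the paper's: the identical cq decomposition reducing the claim to scalar inequalities between $S = \sum_x p_x^\alpha \tr \rho_x^\alpha$ and $T = \tr\left(\sum_x p_x \rho_x\right)^\alpha$, the super/subadditivity of the trace power for the inner bound, and the same uniform-mixture trick combined with convexity/concavity of $A \mapsto \tr A^\alpha$ for the outer bound. The only difference is cosmetic: you spell out the $\alpha \in [0,1)$ case and the $\alpha = 0$ endpoint explicitly, which the paper dispatches with ``the case of $\alpha < 1$ is treated similarly.''
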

\begin{proof}
Let
\beq
\rho^{RX} = \sum_x p_x \rho_x \otimes \ketbra{x}^X.
\enq
We now have
\begin{align}
S_\alpha(R)_\rho & = -\frac{1}{\alpha-1} \log \tr \left( \sum_x p_x \rho_x \right)^\alpha \\
S_\alpha(RX)_\rho & = -\frac{1}{\alpha-1} \log \sum_x p_x^\alpha \tr \rho_x^\alpha.
\end{align}
Let $\alpha \in (1,2)$. To show the left inequality of \eqref{dummy6}, we
have to show that
\beq
|X|^{\alpha-1} \sum_x p_x^\alpha \tr \rho_x^\alpha \geq \tr \left( \sum_x
p_x \rho_x \right)^\alpha.
\enq
We now have
\begin{align}
\text{RHS} & = \tr \left( \sum_x \frac{1}{|X|} |X| p_x \rho_x \right)^\alpha \\
& \leq \sum_x \frac{1}{|X|} |X|^\alpha p_x^\alpha \tr \rho_x ^\alpha \\
& = \text{LHS},
\end{align}
where the inequality follows from the operator convexity of $x \mapsto x^{\alpha}$.

To show the right inequality of \eqref{dummy6}, we have to show that
\beq
\tr \left( \sum_x p_x \rho_x \right)^\alpha \geq \sum_x p_x^\alpha \tr
\rho_x^\alpha,
\enq
which follows easily \cite{simon-book}.

The case of $\alpha < 1$ is treated similarly.
\end{proof}

\begin{lemma}
For density matrices $\rho^{AB}$, $\tau^A$, $\sigma^B$, and
$\rho^B = \tr_A \rho^{AB}$, we have
\beq
\label{dummy7}
F(\rho^{AB}, \tau^A \otimes \rho^B) \geq \left[ F(\rho^{AB}, \tau^A \otimes \sigma^B) \right]^2.
\enq
\end{lemma}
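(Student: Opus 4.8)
The plan is to prove the bound entirely at the level of purifications, invoking Uhlmann's theorem twice with a single Cauchy--Schwarz step in between, so that the defining relation $\rho^B = \tr_A \rho^{AB}$ enters only through one clean operator inequality (this avoids having to manipulate $\sqrt{\rho^{AB}}$ directly). First I would fix a purification $\ket{\psi}^{ABR}$ of $\rho^{AB}$ with the reference split as $R = R_1 R_2$, together with a purification $\ket{\tau}^{AR_1}$ of $\tau^A$ and a product purification $\ket{\sigma}^{BR_2}$ of $\sigma^B$. The purifications of $\tau^A \otimes \sigma^B$ on $ABR$ are exactly the vectors $(\eye^{AB} \otimes V_R)\left(\ket{\tau}^{AR_1} \otimes \ket{\sigma}^{BR_2}\right)$ obtained by unitaries $V_R$ on $R$, so Uhlmann's theorem lets me absorb the optimal $V_R$ into a new purification $\ket{\tilde\psi}$ of $\rho^{AB}$ and write $F(\rho^{AB}, \tau^A\otimes\sigma^B) = \left| \bra{\tilde\psi}\left( \ket{\tau}^{AR_1}\otimes\ket{\sigma}^{BR_2}\right) \right|$. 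Introducing the subnormalized vector $\ket{\mu}^{BR_2} \equiv \left(\bra{\tau}^{AR_1}\otimes\eye^{BR_2}\right)\ket{\tilde\psi}$, this reads $F(\rho^{AB},\tau^A\otimes\sigma^B) = |\braket{\mu|\sigma}| \le \sqrt{\braket{\mu|\mu}}$, the inequality being Cauchy--Schwarz with $\braket{\sigma|\sigma}=1$. Hence it suffices to show $F(\rho^{AB},\tau^A\otimes\rho^B) \ge \braket{\mu|\mu}$.

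The key step is an operator inequality for the reduced state $\mu^B \equiv \tr_{R_2}\ketbra{\mu}$, which satisfies $\braket{\mu|\mu} = \tr \mu^B$. Writing $\omega^{ABR_1} \equiv \tr_{R_2}\ketbra{\tilde\psi}$, I have $\mu^B = \bra{\tau}^{AR_1}\omega^{ABR_1}\ket{\tau}^{AR_1} = \tr_{AR_1}\left[\left(\ketbra{\tau}^{AR_1}\otimes\eye^B\right)\omega^{ABR_1}\right]$, whereas $\rho^B = \tr_{AR_1}\omega^{ABR_1}$ because $\ket{\tilde\psi}$ purifies $\rho^{AB}$. Since $\eye^{AR_1}-\ketbra{\tau}^{AR_1}\ge 0$ and $\omega^{ABR_1}\ge 0$, the partial trace over $AR_1$ of their product is a positive operator on $B$, giving $\mu^B \le \rho^B$; operator monotonicity of the square root then yields $\sqrt{\mu^B} \le \sqrt{\rho^B}$. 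This is the noncommutative counterpart of the elementary bound $\sum_a \sqrt{p(a,b)\,q(a)} \le \sqrt{p_B(b)}$ that makes the classical version of the lemma work.

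It remains to lower-bound the target fidelity. For any purification $\ket{\nu}^{BR_2}$ of $\rho^B$, the vector $(\eye^{AB}\otimes V_R)\left(\ket{\tau}^{AR_1}\otimes\ket{\nu}^{BR_2}\right)$ is a purification of $\tau^A\otimes\rho^B$, so Uhlmann's theorem gives $F(\rho^{AB},\tau^A\otimes\rho^B) \ge \left|\bra{\tilde\psi}\left(\ket{\tau}^{AR_1}\otimes\ket{\nu}^{BR_2}\right)\right| = |\braket{\mu|\nu}|$. Maximizing over $\ket{\nu}$ and applying Uhlmann once more, now between the fixed vector $\ket{\mu}$ and the purifications of $\rho^B$, gives $F(\rho^{AB},\tau^A\otimes\rho^B) \ge \| \sqrt{\mu^B}\sqrt{\rho^B}\|_1 \ge \tr\!\left(\sqrt{\mu^B}\sqrt{\rho^B}\right) \ge \tr \mu^B$, where the second inequality is $\|X\|_1 \ge \tr X$ and the last uses $\sqrt{\rho^B}\ge\sqrt{\mu^B}$. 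Chaining with the first paragraph closes the argument.

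The main obstacle, and the place where care is needed, is the Uhlmann bookkeeping: one must optimize over the purification of $\tau^A\otimes\sigma^B$ (equivalently, absorb $V_R$ into $\ket{\tilde\psi}$) rather than fix the purification of $\tau^A$ prematurely, since fixing it would only lower-bound $F(\rho^{AB},\tau^A\otimes\sigma^B)$ whereas the proof needs an upper bound. Once the projected vector $\ket{\mu}$ is in hand, everything reduces to the single inequality $\mu^B\le\rho^B$; in a full write-up I would verify the subnormalized form of Uhlmann's theorem used in the last display and check that $R_1,R_2$ can be chosen large enough to carry all the purifications simultaneously.
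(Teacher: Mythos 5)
Your argument is correct, but it takes a genuinely different route from the paper's. The paper first proves the pure-state case $\rho^{AB}=\ketbra{\Psi}^{AB}$ by an explicit Schmidt-decomposition computation: there $F(\Psi^{AB},\tau^A\otimes\Psi^B)=\sqrt{\tr\,\tau^A(\Psi^A)^2}$ holds exactly, while $[F(\Psi^{AB},\tau^A\otimes\sigma^B)]^2=\tr\,\Psi^{AB}(\tau^A\otimes\sigma^B)$ is bounded by dropping $\sigma^B\leq\eye^B$ and then applying a Hilbert--Schmidt Cauchy--Schwarz step $\tr\,\Psi^A\tau^A\leq\sqrt{\tr\,\tau^A(\Psi^A)^2}$; the mixed case is then lifted from this by choosing joint Uhlmann purifications $\Psi^{R_1ABR_2}$, $\tau^{R_1A}$, $\sigma^{BR_2}$ and invoking monotonicity of fidelity under partial trace across the cut $R_1A:BR_2$. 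You never pass through a pure-state lemma: you stay at the purification level throughout, and the paper's two computational pillars are replaced by (i) the Cauchy--Schwarz bound $|\braket{\mu|\sigma}|\leq\sqrt{\braket{\mu|\mu}}$ for the projected vector $\ket{\mu}=(\bra{\tau}^{AR_1}\otimes\eye^{BR_2})\ket{\tilde\psi}$, which plays the role of the paper's $\sigma^B\leq\eye^B$, and (ii) the operator inequality $\mu^B\leq\rho^B$ combined with operator monotonicity of $x\mapsto\sqrt{x}$ and a subnormalized Uhlmann step, which plays the role of the paper's Cauchy--Schwarz. What the paper's route buys is economy: the pure case is a short calculation and the only nontrivial inputs are Uhlmann's theorem and fidelity monotonicity. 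What your route buys is structure: the hypothesis $\rho^B=\tr_A\rho^{AB}$ enters through the single positivity statement $\mu^B\leq\rho^B$, and no Schmidt decomposition is needed; the price is heavier machinery plus the bookkeeping you flag, all of which does check out --- $\rho^B-\mu^B=\tr_{AR_1}[((\eye-\ketbra{\tau})\otimes\eye^B)\,\omega]\geq 0$ follows by cycling $\sqrt{\eye-\ketbra{\tau}}$ around the partial trace, the identity $\max_\nu|\braket{\mu|\nu}|=\|\sqrt{\mu^B}\sqrt{\rho^B}\|_1$ for subnormalized $\ket{\mu}$ follows from the usual proof of Uhlmann's theorem, and fixing $\dim\cH_{R_1}=\dim\cH_A$ and $\dim\cH_{R_2}=\dim\cH_B$ from the outset accommodates every purification the argument uses.
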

\begin{proof}
Let us first prove for a pure state $\rho^{AB} = \ketbra{\Psi}^{AB}$.
Let $\Psi^B = \tr_A \Psi^{AB}$ and we have to show that
\beq
\label{dummy8}
F(\Psi^{AB}, \tau^A \otimes \Psi^B) \geq \left[
F(\Psi^{AB}, \tau^A \otimes \sigma^B) \right]^2.
\enq
Let the Schmidt decomposition be $\ket{\Psi}^{AB} =$
$\sum_i \sqrt{\lambda_i} \ket{i}^A \ket{i}^B$. Let $\Psi^A = \tr_B \Psi^{AB}$. Then
\linebreak $F(\Psi^{AB}, \tau^A \otimes \Psi^B) =
\sqrt{\tr \, \tau^A (\Psi^A)^2}$ and
\begin{align}
\left[ F(\Psi^{AB}, \tau^A \otimes \sigma^B) \right]^2 & =
\tr \Psi^{AB} (\tau^A \otimes \sigma^B) \\
& \leq \tr \Psi^{AB} ( \tau^A \otimes \eye^B) \\
& = \tr \Psi^A \tau^A \\
& \leq \sqrt{\tr \, \tau^A (\Psi^A)^2} \\
& = F(\Psi^{AB}, \tau^A \otimes \Psi^B),
\end{align}
where the first inequality follows since $\sigma^B \leq \eye^B$ and the
second inequality is easy to prove.

Now let us prove \eqref{dummy7} for mixed $\rho^{AB}$.
Let $\Psi^{R_1 A B R_2}$,
$\tau^{R_1 A}$, $\sigma^{B R_2}$ be purifications of $\rho^{AB}$, $\tau^A$ 
and $\sigma^B$ respectively such that (invoking the Uhlmann's Theorem 
\cite{uhlmann-1976})
\beq
F(\rho^{AB}, \tau^A \otimes \sigma^B) = F(\Psi^{R_1 A B R_2}, \tau^{R_1 A} \otimes
\sigma^{B R_2}).
\enq
Let $\Psi^{B R_2} = \tr_{R_1 A} \Psi^{R_1 A B R_2}$  and we make no
assumption that it is pure. We now have
\begin{align}
F(\rho^{AB}, \tau^A \otimes \rho^B) & \geq F(\Psi^{R_1 A B R_2},
\tau^{R_1 A} \otimes \Psi^{B R_2}) \\
& \geq \left[ F(\Psi^{R_1 A B R_2}, \tau^{R_1 A} \otimes \sigma^{B R_2}) 
\right]^2 \\
& =  \left[ F(\rho^{AB}, \tau^A \otimes \sigma^B) \right]^2,
\end{align}
where the first inequality follows from the monotonicity of the fidelity under 
partial trace, the second inequality follows from \eqref{dummy8}, and the 
equality at the end follows from our choice of purifications.

In particular, choosing $\tau^A = \tr_B \rho^{AB} \equiv \rho^A$ yields
\beq
\label{dummy9}
F(\rho^{AB}, \rho^A \otimes \rho^B) \geq
\left[ F(\rho^{AB}, \rho^A \otimes \sigma^B) \right]^2.
\enq
This inequality would be put to use later. QED.
\end{proof}

\end{document}